    \providecommand\BibTeX{{%
        \normalfont B\kern-0.5em{\scshape i\kern-0.25em b}\kern-0.8em\TeX}}}
\pgfplotsset{compat=newest}
\pgfplotsset{plot coordinates/math parser=false}
\pgfplotsset{compat = 1.3}
\tikzstyle{every picture}+=[remember picture]
\pgfplotsset{
    groupplot xlabel/.initial={},
    every groupplot x label/.style={
        at={($({group c1r\pgfplots@group@rows.west}|-{group c1r\pgfplots@group@rows.outer south})!0.5!({group c\pgfplots@group@columns r\pgfplots@group@rows.east}|-{group c\pgfplots@group@columns r\pgfplots@group@rows.outer south})$)},
        anchor=north,
    },
    groupplot ylabel/.initial={},
    every groupplot y label/.style={
            rotate=90,
        at={($({group c1r1.north}-|{group c1r1.outer
west})!0.5!({group c1r\pgfplots@group@rows.south}-|{group c1r\pgfplots@group@rows.outer west})$)},
        anchor=south
    },
    execute at end groupplot/.code={%
      \node [/pgfplots/every groupplot x label]
{\pgfkeysvalueof{/pgfplots/groupplot xlabel}};
      \node [/pgfplots/every groupplot y label]
{\pgfkeysvalueof{/pgfplots/groupplot ylabel}};
    },
    group/only outer labels/.style =
{
group/every plot/.code = {%
    \ifnum\pgfplots@group@current@row=\pgfplots@group@rows\else%
        \pgfkeys{xticklabels = {}, xlabel = {}}\fi%
    \ifnum\pgfplots@group@current@column=1\else%
        \pgfkeys{yticklabels = {}, ylabel = {}}\fi%
}
}
}
\def\endpgfplots@environment@groupplot{%
    \endpgfplots@environment@opt%
    \pgfkeys{/pgfplots/execute at end groupplot}%
    \endgroup%
}
\pgfplotsset{every tick label/.append style={font=\normalsize}}
\newcommand{\secref}[1]{Section~\ref{#1}}
\newcommand{\appref}[1]{Appendix~\ref{#1}}
\theoremstyle{definition}
\newtheorem{theorem}{Theorem}[section]
\newtheorem{lemma}[theorem]{Lemma}
\newcommand{\bigO}{O}
\algrenewcommand\algorithmicindent{1.0em} 
\algnewcommand\algorithmicforeach{\textbf{Parfor}}
\algnewcommand\algorithmicparforeach{\textbf{parallel\_for\_each}}
\newcommand{\SR}[1]{{#1}}
\newcommand{\CR}[1]{{#1}}
\newcommand{\bufGraph}{ForkGraph}
\newcommand{\forkP}{fork-processing pattern}
\newcommand{\bufModel}{buffered execution model}
\newcommand{\ClgC}{blue}
\newcommand{\CgmC}{green}
\newcommand{\Cgt}{magenta}
\newcommand{\CgtC}{red}
\begin{document}
\fancyhead{}


\def\complete{}
\ifdefined\complete{}
\title{Cache-Efficient Fork-Processing Patterns on Large Graphs (complete version)}
\else
\title{Cache-Efficient Fork-Processing Patterns on Large Graphs}
\fi

\author{Shengliang Lu}
\affiliation{%
  \institution{National University of Singapore}
  \country{Singapore}}

\author{Shixuan Sun}
\affiliation{%
  \institution{National University of Singapore}
  \country{Singapore}}

\author{Johns Paul}
\affiliation{%
  \institution{National University of Singapore}
  \country{Singapore}}

\author{Yuchen Li}
\affiliation{%
  \institution{Singapore Management University}
  \country{Singapore}}

\author{Bingsheng He}
\affiliation{%
  \institution{National University of Singapore}
  \country{Singapore}}
\renewcommand{\shortauthors}{Lu et al.}

\begin{filecontents*}[overwrite]{plot/intro_cputime.csv}
index	Ligra	Gemini	GraphIt	bufGraph
1	1.00	1.00	1.00	1.0
2	1.13	1.19	1.24	1.1
4	1.32	1.45	1.46	1.2
6	1.43	1.65	1.58	1.3
8	1.52	1.84	1.68	1.2
10	1.60	2.00	1.77	1.4
\end{filecontents*}
\begin{filecontents*}[overwrite]{plot/intro_cputime_ts.csv}
index		1	2	3	4	5	6	7	8	9	10
Ligra	1.00	1.18	1.23	1.28	1.31	1.34	1.37	1.40	1.45	1.49
Gemini	1.00	1.25	1.30	1.32	1.37	1.42	1.46	1.54	1.66	1.79
GraphIt	1.00	1.20	1.24	1.30	1.39	1.49	1.61	1.79	1.94	2.13
\end{filecontents*}
\begin{filecontents*}[overwrite]{plot/intro_cputime_bd.csv}
index	1	2	3	4	5	6	7	8	9	10
Ligra	0.31	0.38	0.43	0.49	0.53	0.56	0.59	0.63	0.65	0.68
Gemini	0.09	0.12	0.15	0.18	0.21	0.22	0.26	0.27	0.29	0.34
GraphIt	0.38	0.51	0.59	0.67	0.73	0.80	0.86	0.93	1.01	1.06
\end{filecontents*}

\begin{filecontents*}[overwrite]{plot/intro_llc.csv}
index	Ligra	Gemini	GraphIt	bufGraph
1		1.00	1.00	1.00	1.0
2		1.15	1.23	1.25	1.1
4		1.34	1.51	1.47	1.2
6		1.46	1.71	1.59	1.3
8		1.55	1.91	1.69	1.2
10		1.64	2.15	1.79	1.4
\end{filecontents*}
\begin{filecontents*}[overwrite]{plot/intro_llc_ts.csv}
index		1	2	3	4	5	6	7	8	9	10
Ligra		1.00	1.15	1.25	1.34	1.40	1.46	1.51	1.55	1.60	1.64
Gemini		1.00	1.23	1.39	1.51	1.61	1.71	1.81	1.91	2.02	2.15
GraphIt		1.00	1.25	1.39	1.47	1.53	1.59	1.64	1.69	1.74	1.79
\end{filecontents*}

\begin{filecontents*}[overwrite]{plot/intro_cputime_ts2.csv}
index	10	5	2	1
Ligra	1.00	0.99	0.92	0.88
Gemini	1.00	0.93	0.74	0.64
GraphIt	1.00	0.95	0.89	0.85
\end{filecontents*}
\begin{filecontents*}[overwrite]{plot/intro_llc_ts2.csv}
index	10	5	2	1
Ligra	1.00	1.13	1.35	1.69
Gemini	1.00	1.21	1.80	2.14
GraphIt	1.00	1.34	1.73	1.95
\end{filecontents*}

\begin{filecontents*}[overwrite]{plot/compare_cpu.csv}
dataset	index	S	O	C
Lg		1		72	100	95
Gm		2		99	100	97
Gt		3		96	100	99
\end{filecontents*}

\begin{filecontents*}[overwrite]{plot/compare_mem.csv}
dataset	index	S		O		C
Lg		1		5.6		111.6	7.6
Gm		2		2.38	33		17.28
Gt		3		2.3		37		10
\end{filecontents*}

\begin{filecontents*}[overwrite]{plot/compare_cs.csv}
dataset	index	S		O			C
Lg		1		7404.44	12968.72	240.85
Gm		2		4974.30	24936.48	63.610
Gt		3		2258.56	16478.85	98.07
\end{filecontents*}

\begin{filecontents*}[overwrite]{plot/compare_llc.csv}
dataset	index	S				O				C
Lg		1		438486312.60	702085345.99	734965356.55
Gm		2		45437599.23		147239526.02	100452155.58
Gt		3		66383100.46		136780104.90	131412591.78
\end{filecontents*}

\begin{filecontents*}[overwrite]{plot/compare_time.csv}
dataset	index	S		O			C
Lg		1		2.98	9.38		2.72
Gm		2		33.59	15.98	    0.56
Gt		3		24.76	21.34		0.69
\end{filecontents*}

\begin{filecontents*}[overwrite]{plot/BC.csv}
dataset	index	Ligra-S	Ligra-C	Gemini-S	Gemini-C	GraphIt-S	GraphIt-C	bufGraph
Ca	1	1.38	1.00	2.68	5.48	1.05	1.37	0.009
Us	2	1.65	1.00	21.60	14.30	1.55	1.64	0.009
Eu	3	0.61	1.00	4.28	2.50	0.62	1.12	0.13
Or	4	4.52	1.00	3.56	1.07	1.42	1.68	0.79
Wk	5	3.22	1.00	1.04	0.67	1.42	1.69	0.28
Lj	6	2.97	1.00	2.22	1.25	1.27	1.47	0.91
Pt	7	1.98	1.00	2.50	1.19	0.88	1.06	0.59
Tw	8	10.30	1.00	2.80	1.27	0.67	0.93	0.51

\end{filecontents*}


\begin{filecontents*}[overwrite]{plot/NCP.csv}
dataset	index	Ligra-S	Ligra-C	Gemini-S	Gemini-C	GraphIt-S	GraphIt-C	bufGraph
Or	1	0.99	1.00	2.11	0.93	9.56	0.74	0.49
Wk	2	1.24	1.00	16.55	0.31	13.82	0.30	0.08
Lj	3	1.19	1.00	13.51	0.23	10.03	0.27	0.03
Pt	4	1.63	1.00	1.74	0.50	5.17	0.52	0.13
Tw	5	2.84	1.00	7.58	0.77	5.53	0.75	0.05
\end{filecontents*}

\begin{filecontents*}[overwrite]{plot/LL.csv}
dataset	index	Ligra-S	Ligra-C	Gemini-S	Gemini-C	GraphIt-S	GraphIt-C	bufGraph
Ca	1	1.48	1.00	2.87	5.86	1.13	1.46	0.01
Us	2	1.74	1.00	33.77	26.07	1.64	1.62	0.01
Eu	3	0.66	1.00	4.55	2.66	0.66	1.20	0.13
Wk	4	2.56	1.00	0.78	0.50	1.06	1.26	0.21
Pt	5	1.74	1.00	2.40	1.14	0.94	1.02	0.62
\end{filecontents*}

\begin{filecontents*}[overwrite]{plot/LLC.csv}
dataset		index	Ligra-S	Ligra-C		Gemini-S	Gemini-C	GraphIt-S	GraphIt-C	bufGraph	Sequential
LL_on_Ca	1	9.50E+08	2.11E+09	7.30E+08	5.05E+09	1.43E+09	5.69E+09	5.25E+08	1.75E+07
LL_on_Us	2	1.19E+11	1.27E+11	1.32E+11	3.03E+11	3.75E+11	4.25E+11	1.09E+10	2.78E+08
NCP_on_LJ	3	4.38E+10	7.35E+10	4.54E+09	1.00E+10	6.64E+09	1.31E+10	7.31E+08	3.56E+08
NCP_on_Tw	4	3.42E+10	4.01E+10	1.09E+10	1.32E+10	1.45E+10	2.04E+10	1.35E+09	1.41E+08
\end{filecontents*}
\begin{filecontents*}[overwrite]{plot/Work.csv}
dataset		index	Ligra-S	Ligra-C		Gemini-S	Gemini-C	GraphIt-S	GraphIt-C	bufGraph	Sequential
LL_on_Ca	1	5.27E+10	5.58E+10	2.23E+11	7.81E+10	1.01E+11	3.22E+11	4.83E+08	4.63E+07
LL_on_Us	2	2.95E+12	2.94E+12	8.91E+12	4.53E+12	9.91E+12	6.53E+12	9.64E+09	5.77E+08
NCP_on_LJ	3	1.70E+10	1.70E+10	1.68E+09	1.46E+09	1.71E+09	1.24E+09	1.25E+09	1.33E+08
NCP_on_Tw	4	1.26E+10	1.26E+10	9.86E+09	4.88E+09	7.38E+09	3.48E+09	3.28E+09	6.34E+08
\end{filecontents*}

\begin{filecontents*}[overwrite]{plot/acc.csv}
dataset		index	none	buffer	gathering	ordering	yield
LL_on_Ca	1		1.00	25.17	108.16		121.0		172.21
LL_on_Us	2		1.00	19.19	164.25		162.33		211.94
NCP_on_LJ	3		1.00	0.48	18.70		20.16		36.16
NCP_on_Tw	4		1.00	0.14	3.64		29.66		59.12

\end{filecontents*}

\begin{filecontents*}[overwrite]{plot/scale.csv}
CGQ	index	SSSP	BFS	DFS	PPR		RW	Theoretical
1		1	1		1	1	1		1	1
10		2	2		3	3	5		2	10
100		3	6		7	9	33		12	100
1000	4	35		21	77	212		123	1000
10000	5	144		288	887	1232	888	10000
\end{filecontents*}

\begin{filecontents*}[overwrite]{plot/contention.csv}
index	Or	Wk	Lj	Pt	Tw
1	1	1	1	1	1
2	2.1	1.86	1.9	1.82	1.83
3	3.03	2.72	3	2.55	2.48
4	3.92	3.46	3.9	3.21	3.01
5	4.75	3.99	4.7	3.9	3.44
6	5.51	4.63	5.4	4.55	3.79
7	6.2	5.17	6	5.16	4.09
8	6.81	5.98	6.5	5.73	4.33
9	7.35	6.45	7.2	6.34	4.75
10	7.86	7.59	8.2	6.85	5.19
\end{filecontents*}

\begin{filecontents*}[overwrite]{plot/throughput.csv}
index	SSSP	BFS	DFS	PPR	RW
1	1	1	1	1	1
2	1.59	1.56	3.85	5.32	3.33
3	2.87	1.64	7.34	10.00	5.26
4	2.86	1.73	10.91	15.50	5.88
5	3.10	1.93	11.77	15.99	6.18
\end{filecontents*}
\begin{abstract}
As large graph processing emerges, we observe a costly \emph{\forkP{}} (FPP) that is common in many graph algorithms. The unique feature of the FPP is that it launches many \emph{independent queries} from different source vertices on the same graph. For example, an algorithm in analyzing the network community profile can execute Personalized PageRanks that start from tens of thousands of source vertices at the same time. We study the efficiency of handling FPPs in state-of-the-art graph processing systems on multi-core architectures, including Ligra, Gemini, and GraphIt. We find that those systems suffer from severe cache miss penalty because of the irregular and uncoordinated memory accesses in processing FPPs.

In this paper, we propose \emph{\bufGraph{}}, a cache-efficient FPP processing system on multi-core architectures. In order to improve the cache reuse, we divide the graph into partitions each sized of LLC (last-level cache) capacity, and the queries in an FPP are buffered and executed on the partition basis. We further develop efficient intra- and inter-partition execution strategies for efficiency. For intra-partition processing, since the graph partition fits into LLC, we propose to execute each graph query with efficient sequential algorithms (in contrast with parallel algorithms in existing parallel graph processing systems) and present an atomic-free query processing method by consolidating contending operations to cache-resident graph partition. For inter-partition processing, we propose two designs, yielding and priority-based scheduling, to reduce redundant work in processing. Besides, we theoretically prove that \bufGraph{} performs the same amount of work, to within a constant factor, as the fastest known sequential algorithms in FPP queries processing, which is work efficient. Our evaluations on real-world graphs show that \bufGraph{} significantly outperforms state-of-the-art graph processing systems (including Ligra, Gemini, and GraphIt) with two orders of magnitude speedups.

\end{abstract}

\maketitle

\section{Introduction}\label{sec:introduction}
Graphs are \emph{de facto} data structures in various applications such as social network analysis, bioinformatics, online transaction analysis, and weblink analysis. We observe a costly \emph{\forkP{}} (FPP) \CR{that is common} in many graph processing algorithms, as defined in Algorithm~\ref{algo:fork}. The unique feature of an FPP is that it launches many \emph{independent queries} from different source vertices on the same graph (we call those queries \emph{FPP queries}). Below are several representative examples of FPP-based graph algorithms.

\setlength{\textfloatsep}{1pt}
\begin{algorithm}[h]
    \small
    \caption{Fork-processing pattern (FPP) on graph.}
    \label{algo:fork}
    \begin{algorithmic}[1]
        \State Generate vertex set $S$ \label{fork:pre}
        \ParForEach{vertex $v$ $\in$ $S$}
            \State Launch a graph query from $v$ \label{fork:cgq}
        \EndFor
    \end{algorithmic}
\end{algorithm}

\setlength{\textfloatsep}{1pt}
\begin{table*}[t]
    \captionsetup{aboveskip=0pt}
    \caption{\SR{Profiling performance analysis of processing 10,000 PPRs on LiveJournal graph using existing GPSs.}}
    \label{tab:intro}

\centering
  \begin{tabularx}{\textwidth}{|l|lXX|lXX|lXX|}
    \Xhline{2\arrayrulewidth}
    System             & \multicolumn{3}{c|}{Ligra}   & \multicolumn{3}{c|}{Gemini}  & \multicolumn{3}{c}{GraphIt} \\ \hline
    \#Threads in total      & 1               & 10     & 10    & 1               & 10     & 10    & 1               & 10     & 10 \\ \hline
    Execution Scheme        & single-threaded & $t=10$ & $t=1$ & single-threaded & $t=10$ & $t=1$ & single-threaded & $t=10$ & $t=1$ \\ \Xhline{2\arrayrulewidth}
    Instructions ($\times 10^{14}$)     & 4.57     & 4.59   & 4.56    & 2.07     & 2.20   & 2.46    & 1.30     & 1.55   & 1.31 \\
    LLC loads ($\times 10^{12}$)        & 9.10     & 9.00   & 9.21    & 1.30     & 1.46   & 1.37    & 1.59     & 1.63   & 1.60 \\
    LLC miss ratio                      & 50.0\%   & 48.1\% & 79.0\%  & 40.1\%   & 31.6\% & 76.4\%  & 50.1\%   & 38.9\% & 85.6\% \\
    Runtime (hour)                      & 46.74    & 7.65   & 6.75    & 11.66    & 2.56   & 1.64    & 8.39     & 2.09   & 1.59 \\
    \Xhline{2\arrayrulewidth}
\end{tabularx}
\vspace{-13pt}
\end{table*}

\begin{enumerate}[wide,noitemsep,topsep=1pt]
\item \emph{Betweenness centrality} (BC) is widely used to calculate the relative importance of vertices in a graph~\cite{jamour2017parallel}. On an unweighted graph, BC is solved by first invoking many independent BFSs (breadth-first searches), each from a random vertex. Next, the algorithm gathers the results of each BFS to obtain the centrality of vertices~\cite{brandes2001faster}. Although various algorithm variants have been proposed, they have common FPPs of launching massive BFS queries~\cite{wang2001fast,gera2020traversing}.

\item \emph{Network community profile} (NCP) is defined as the function of the (approximate) best conductance for clusters of a given size in the graph versus the cluster size \cite{leskovec2009community}. An efficient method computing NCP is based on local clustering algorithms, which start a number of PPRs (personalized page ranks) from randomly selected vertices to calculate NCP approximately~\cite{shun2016parallel, wang2017capacity, yang2015defining, fortunato2016community}. The number of PPRs can be at the scale of tens of thousands in the previous study~\cite{shun2016parallel}.

\item \emph{Landmark labeling} (LL) pre-computes the shortest paths between selected landmark vertices to accelerate the path queries. Researchers proposed to compute the labels by executing a batch of SSSPs (single-source shortest paths) or BFSs simultaneously \cite{akiba2013fast}. The number of queries in a batch can range from 16 to 1,024 in the previous studies~\cite{akiba2013fast}.
\end{enumerate}

In practice, the processing time of the FPP is the major bottleneck of those graph algorithms, which takes an overwhelming majority of the execution time ($>90\%$) in our experiments. In this paper, we study whether and how we can improve the performance of handling FPPs on large graphs.

As large graph processing emerges recently, substantial efforts have been made in developing parallel graph processing systems (GPSs)~\cite{shun2013ligra, nguyen2013lightweight,zhu2016gemini,zhang2018graphit}. Those GPSs mainly focus on improving the performance of a single query by taking advantage of the intra-query parallelism. Since an FPP has many independent queries, we study how existing GPSs can take advantage of inter-query parallelism. To this end, we use $t$ to denote the number of threads assigned to a query, and evaluate different $t$ values for balancing the intra- and inter-query parallelisms.

We evaluate three state-of-the-art GPSs (Ligra~\cite{shun2013ligra}, Gemini~\cite{zhu2016gemini}, and GraphIt~\cite{zhang2018graphit}) on a 10-core machine (with hyperthreading disabled). Table~\ref{tab:intro} presents the performance analysis of handling \SR{10,000 PPRs} for NCP on a real graph. The detailed experimental setup can be found in \secref{sec:exp}. For varying different $t$ values, we fix the total number of threads to be ten (one thread per core). Specifically, when $t=1$, GPSs fully take advantage of inter-query parallelism. When $t=10$, GPSs process queries one by one, and each query runs in parallel, \CR{taking advantage of intra-query parallelism only}. We find that the configuration of $t=1$ achieves the best performance among different $t$ values. In the table, we also show the profiling of executing GPSs using a single thread.

We make an important observation across different GPSs. Despite that \CR{the executions with configurations of $t=1$ achieve the best performance by taking advantage of inter-query parallelism, they suffer} from a high LLC (last level cache) miss ratio. It can be up to $85.6\%$, a huge rise from both the single-threaded execution and the approach of intra-query parallelism ($t=10$). When $t=1$, threads are handling FPP queries independently and simultaneously, and they are filling up the CPU cache with different parts of the graph. Such uncoordinated memory accesses among FPP queries cause severe LLC cache misses. We present more details in \secref{sec:background}.

To improve the efficiency of handling \CR{FPPs}, we develop \emph{\bufGraph{}}, a cache-efficient system for processing FPPs for in-memory graphs on multi-core machines. The core design of \bufGraph{} is based on a novel \emph{\bufModel{}} on graphs to leverage the locality and sharing opportunities among FPP queries by coordinating their operations to the graph. Specifically, we divide the graph data into LLC-sized partitions and associate each partition with a buffer to store the queries' operations to the partition. We dynamically schedule a partition to process, and the buffered operations are performed in a batch on the cache-resident graph partition.

We further develop efficient intra- and inter-partition processing strategies for efficiency. For intra-partition processing, since the graph partition fits into LLC, we propose to execute each \CR{buffered operation} with efficient sequential algorithms (unlike parallel algorithms in existing graph systems that have more work due to parallelism) and develop atomic-free mechanisms by consolidating contending operations \CR{in the} cache-resident graph partition. For inter-partition processing, we observe that a wrong execution order of graph partitions causes a significant amount of redundant work, as well as the benefits of each buffered operation in the same graph partition vary significantly in many graph applications. Thus, we propose two designs accordingly. The first is priority-based scheduling to select the partition that could lead to convergence quickly. The second is yielding optimization that early terminates a query's intra-partition processing to reduce redundant work.

Besides, with the designs in intra- and inter-partition processing, \bufGraph{} performs the same amount of work, to within a constant factor, as the fastest known sequential algorithms in FPP queries processing, which is work efficient.

We perform a comprehensive analysis of \bufGraph{}'s performance in comparison with three state-of-the-art GPSs (Ligra, Gemini, and GraphIt). The workload includes three applications BC, NCP, and LL on eight real-world graphs. Our experiments show that \bufGraph{} reduces the total number of LLC misses by up to a factor of $100\times$ compared to other GPSs, and consistently outperforms GPSs, showing $32\times$, $307\times$, and $38\times$ speedups over Ligra, Gemini, and GraphIt on average, respectively.

\CR{
Supplement results are presented
\ifdefined\complete{}
in the appendix.
\else
in the complete version~\cite{forkgraph2021}.
\fi
Our source code is publicly available at (\emph{\url{https://github.com/Xtra-Computing/ForkGraph}}).}

The remainder of this paper is organized as follows. In \secref{sec:background}, we introduce the preliminary and motivation. \secref{sec:overview} presents an overview of \bufGraph{}. We present the design of intra- and inter-partition processing in \secref{sec:cache} and \secref{sec:scheduling}, respectively. We present the experimental results in \secref{sec:exp}. Finally, we review the related work in \secref{sec:related} and conclude in \secref{sec:conclusion}.

\section{Preliminaries and Motivations} \label{sec:background}


\subsection{Preliminaries}
A graph $G = (V, E)$ is defined to be a directed graph, where $V$ is a set of vertices and $E$ is a set of edges. An undirected graph can be represented as a directed graph by replacing each undirected edge with two edges from both directions. Given $G = (V, E)$, graph partition is defined as follows.

\vspace{1mm}\noindent
\textbf{Definition 2.1} {(\textsc{Graph Partition})}
A partition plan of graph $G$ is a division of $V$ into $|\bm{P}|$ disjoint vertex sets. A partition $P_i$ of the graph $G$ is a subgraph of $G$ on the $i$-th vertex set with $1 \leq i \leq |\bm{P}|$. $V_{P_i}$ and $E_{P_i}$ denote the vertex set and the edge set of $P_i$, respectively. Specifically, $E_{P_i} = \{e(u, v) \in E | u \in V_{P_i} \}$.

\vspace{1mm}\noindent
\textbf{FPP-based graph applications.} We have observed the existence of FPPs in many graph applications beyond those presented in the Introduction\CR{, including }ant colony optimization \cite{dorigo2006ant}, single-source $k$-shortest path~\cite{yen1970algorithm}, and graph learning using random walks \cite{perozzi2014deepwalk, grover2016node2vec}.

\vspace{1mm}\noindent
\textbf{Definition 2.2} {(\textsc{FPP Queries})}
Given a graph $G = (V, E)$, the FPP queries denoted by $Q = \{q_1, q_2,...,q_{|Q|}\}$ are a set of graph queries that are homogeneous graph queries (e.g., PPRs) \emph{simultaneously} launched from source vertices $src_1, src_2,...,src_{|Q|} \in V$ on the same graph $G$, where $|Q|$ denotes the number of queries in the FPP.

$|Q|$ varies in different applications. In the previous studies, it ranges from tens to thousands. For example, Shun et al.~\cite{shun2016parallel} run $10^5$ PPRs from random vertices to compute NCP. The LL calculation in Akiba et al.~\cite{akiba2013fast} performs batches of SSSPs/BFSs simultaneously, and the number of queries in a batch varies from 16 to 1,024. \SR{Some graph workloads in the previous studies do not belong to FPP, such as single-query applications like a single BFS~\cite{beamer2012direction}, multiple dependent queries like multi-commodity flow~\cite{awerbuch1994improved}, and heterogeneous graph queries~\cite{yen1971finding}. As FPP is emerging in many graph processing, mining, and learning applications, we focus on FPP queries.}


\vspace{1mm}\noindent
\textbf{Definition 2.3} {(\textsc{An FPP Query's Operations to Graph})} An FPP query can have many operations that are spreading randomly in the graph. We define an operation of an FPP query as a triple in the format of $\langle q_i, v, val\rangle$ to represent an operation belonging to $q_i$ at vertex $v$ with value(s) $val$.

For example, in an SSSP query $q_1$, each vertex $v$ is assigned with a property that represents the length of the shortest path found from source vertex $src_1$ to $v$; an operation to vertex $v$ contains the length $l$ of a path from $src$ to $v$, denoted as $\langle q_1, v, l\rangle$. If $l$ is shorter than the existing path, we update the vertex's property using the operation and generate new operations to the vertex's neighbors.

\subsection{Motivation}
We present more detailed results of Table~\ref{tab:intro} to explain our motivation. \SR{\emph{First, the large number of LLC misses is the performance bottleneck.} When evaluating 10,000 PPRs using the three GPSs, we observed that the number of stalled memory cycles is $34-40\%$ of the total time spent in memory units when leveraging intra-query parallelism ($t=10$). The percentage increases up to $55\%$ when leveraging inter-query parallelism ($t=1$). The memory stalls are mainly caused by the LLC misses, which bottleneck the performance. Thus, we focus on LLC among the multi-layer caches in this work.}

\emph{Second, leveraging the inter-query parallelism in existing GPSs brings uncoordinated memory accesses and thus more LLC misses, which limit the potential benefits of the inter-query parallelism.} Figure~\ref{fig:intro_motivate} shows the evaluation of the GPSs' performance and the number of LLC misses by varying the $t$ value on a 10-core machine. Figure~\ref{fig:intro_content} shows that leveraging inter-query parallelism by setting $t=1$ is better than other settings, mainly because of the reduced synchronization and locking overhead among threads, as well as better load balancing among threads. However, as shown in Figure~\ref{fig:intro_llc}, the total number of LLC misses significantly increases (up to $2.1\times$) as $t$ changes from ten to one.

\SR{As threads under the inter-query parallelism are filling up the CPU cache with different parts of the graph, the uncoordinated memory accesses among FPP queries cause high LLC cache misses and limit the potential benefits of the inter-query parallelism.} We use Figure~\ref{fig:schemec} to illustrate the scenario of uncoordinated operations of FPP queries to the graph. In this example, a GPS is processing three FPP queries simultaneously, each with a thread sharing the LLC. Since threads are handling FPP queries independently, they contend with each other to the limited cache for storing different parts of the graph. This causes severe cache thrashing and downgrades the performance of handling FPPs.

We also observe that, among these three GPS, GraphIt is the only GPS with cache-optimized techniques to break the graph into LLC-sized segments to limit random accesses within the cache~\cite{zhang2018graphit, zhang2017making} and thus performs better than other GPSs when leveraging the intra-query parallelism. However, GraphIt is the most vulnerable GPS when leveraging the inter-query parallelism with the cache optimization. \SR{Particularly, GraphIt ($t=1$) spends 1.59 hours on 10,000 PPRs. Thus, the total CPU time on 10 cores is 15.9 hours, $190\%$ over the CPU time of the single-threaded counterpart (8.39 hours). Similarly, the CPU time of Ligra ($t=1$) is only $144\%$ over the CPU time of the single-threaded counterpart as it does not optimize the cache for intra-query parallelism.}

\SR{Although inter-query parallelism exhibits poor cache efficiency, it still outperforms the default intra-query parallelism in GPSs for most of the cases. The reasons are as follows. First, the inter-query parallelism inherently eliminates cost synchronizations among threads. Second, the inter-query parallelism can benefit from efficient execution by employing state-of-the-art sequential algorithms. Last but not least, as Beamer et al.~\cite{beamer2015locality} show that many parallel implementations do not fully utilize the memory bandwidth, the inter-query parallelism mitigates data dependences and increases memory-level parallelism during processing. It thus shows the potential for significant performance improvement on GPSs with current memory systems.}

In summary, the above-mentioned insights motivate us to propose a system to address the cache inefficiency in leveraging inter-query parallelism so that we can efficiently support FPP queries.

\begin{figure}[t]
	\centering
	\begin{subfigure}[t]{0.23\textwidth}
		\centering
		\begin{tikzpicture}
    \begin{axis}[
        ylabel=Normalized execution time,
        width=45mm,
        height=40mm,
        ylabel near ticks,
        xlabel near ticks,
yticklabel style = {font=\footnotesize},
xticklabel style = {font=\footnotesize},
ylabel style={font=\small},
xlabel style={font=\small},
        symbolic x coords={Ligra,Gemini,GraphIt},
        ybar=0pt,bar width=4pt,
        xtick=data,
        enlarge x limits=0.15,
        ymin = 0,
        ymax = 1.1,
        ytick distance = 0.5,
        ytick = {0, 0.5, 1},
        legend columns=6,
        legend style={
            at={(0.5, 1.0)},
            anchor=south,
            draw=none,
            fill=none,
            font=\small,
        },
        extra y ticks = 1,
        extra y tick labels={},
        extra y tick style={grid=major,major grid style={dashed, draw=red}},
        legend image code/.code={\draw [#1] (0cm,-0.1cm) rectangle (0.1cm,0.1cm);},
        ]
    \addplot[draw=black,fill=blue!10] table [x=index, y=10, col sep=space] {plot/intro_cputime_ts2.csv};
    \addlegendentry{$t=10$}
    \addplot[draw=black,fill=blue!20] table [x=index, y=5, col sep=space] {plot/intro_cputime_ts2.csv};
    \addlegendentry{$t=5$}
    \addplot[draw=black,fill=blue!50] table [x=index, y=2, col sep=space] {plot/intro_cputime_ts2.csv};
    \addlegendentry{$t=2$}
    \addplot[draw=black,fill=blue] table [x=index, y=1, col sep=space] {plot/intro_cputime_ts2.csv};
    \addlegendentry{$t=1$}
    \end{axis}
\end{tikzpicture}
		\captionsetup{aboveskip=-10pt}
		\caption{Normalized execution time.}
		\label{fig:intro_content}
	\end{subfigure}\hfill%
	\begin{subfigure}[t]{0.23\textwidth}
		\centering
		\begin{tikzpicture}
    \begin{axis}[
        ylabel=Normalized \# LLC misses,
        width=45mm,
        height=40mm,
        ylabel near ticks,
        xlabel near ticks,
yticklabel style = {font=\footnotesize},
xticklabel style = {font=\footnotesize},
ylabel style={font=\small},
xlabel style={font=\small},
        symbolic x coords={Ligra,Gemini,GraphIt},
        ybar=0pt,bar width=4pt,
        xtick=data,
        enlarge x limits=0.15,
        ymin = 0,
        ymax = 2.2,
        ytick distance = 0.5,
        ytick = {0, 1, 2, 3},
        legend columns=6,
        legend style={
            at={(0.5, 1.0)},
            anchor=south,
            draw=none,
            fill=none,
            font=\small,
        },
        extra y ticks = 1,
        extra y tick labels={},
        extra y tick style={grid=major,major grid style={dashed, draw=red}},
        legend image code/.code={\draw [#1] (0cm,-0.1cm) rectangle (0.1cm,0.1cm);},
        ]
    \addplot[draw=black,fill=blue!10] table [x=index, y=10, col sep=space] {plot/intro_llc_ts2.csv};
    \addlegendentry{$t=10$}
    \addplot[draw=black,fill=blue!20] table [x=index, y=5, col sep=space] {plot/intro_llc_ts2.csv};
    \addlegendentry{$t=5$}
    \addplot[draw=black,fill=blue!50] table [x=index, y=2, col sep=space] {plot/intro_llc_ts2.csv};
    \addlegendentry{$t=2$}
    \addplot[draw=black,fill=blue] table [x=index, y=1, col sep=space] {plot/intro_llc_ts2.csv};
    \addlegendentry{$t=1$}
    \end{axis}

\end{tikzpicture}
		\captionsetup{aboveskip=-10pt}
		\caption{Normalized \#LLC misses.}
		\label{fig:intro_llc}
	\end{subfigure}
	\captionsetup{aboveskip=1pt}
	\captionsetup{belowskip=-10pt}
	\caption{GPSs' performance affected by cache contention with different numbers of threads assigned to each query, tested with 10,000 PPRs on LiveJournal graph.}
	\label{fig:intro_motivate}
\end{figure}
\begin{figure}[t]
    \centering
    \includegraphics[width=0.3\textwidth]{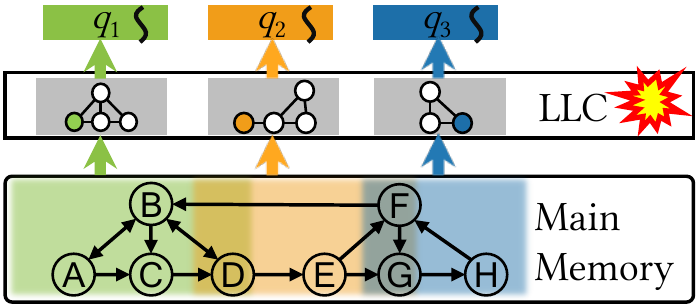}
    \captionsetup{aboveskip=2pt}
    \caption{A GPS leveraging the inter-query parallelism.}
    \label{fig:schemec}
\end{figure}

\section{System Overview}\label{sec:model} \label{sec:overview}

\begin{figure}[t]
    \centering
    \includegraphics[width=0.45\textwidth]{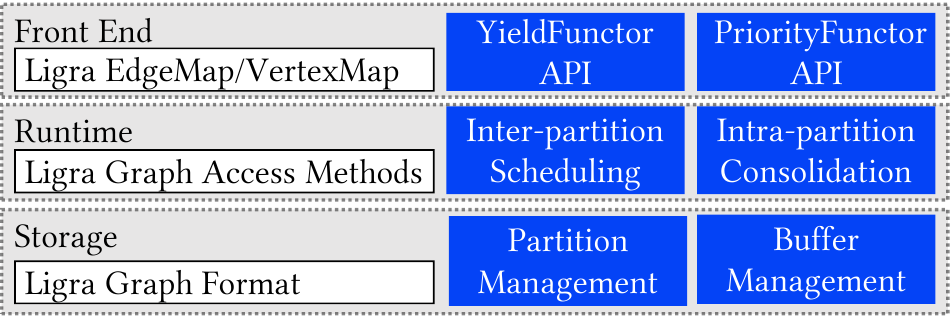}
    \captionsetup{aboveskip=2pt}
    \caption{System overview of \bufGraph{}.}
    \label{fig:sysoverview}
\end{figure}

To improve the cache efficiency of processing FPPs, we propose a cache-efficient system, namely \bufGraph{}. The core design of \bufGraph{} is based on a novel \bufModel{}. We divide graph $G$ into LLC-sized partitions and associate each partition with a buffer that stores the FPP queries' operations to the partition. The buffered execution model leverages the temporal localities among FPP queries by batching operations from different queries and executes them \CR{in a batch for each partition}. Since each graph partition can fit into LLC, random memory accesses of the operations in the batch are naturally limited in the LLC with a low cache miss rate.

We develop efficient intra- and inter-partition processing strategies for efficiency. For intra-partition processing, the work efficiency becomes essential since most operations are processed in cache-resident graph partitions. Therefore, we propose to apply sequential implementations to execute multiple operations simultaneously. We leverage inter-query parallelism by assigning a single thread to handle each buffered operation. We adopt sequential implementations because they are usually more work-efficient than parallel algorithms. Besides, \bufGraph{} consolidates the operations in the buffer that belong to the same query, and thus the operations belonging to the same query can be processed in an atomic-free manner. Moreover, this query-centric operation consolidation significantly reduces redundant operations.

For inter-partition processing, we observe that a wrong execution order of graph partitions causes a significant amount of redundant work, as well as the benefits of each buffered operation in the same graph partition vary significantly in many graph applications. We show these two observations in the experiments. Thus, we have two tasks: 1) to determine when to terminate intra-partition processing and switch to the next partition, and 2) to decide which partition to process next. We propose a yielding optimization to early terminate a query in intra-partition processing to reduce redundant work within a partition. Furthermore, \bufGraph{} applies a priority-based scheduling to select the partition that leads to convergence quickly. We will detail the design of inter-partition scheduling shortly in \secref{sec:scheduling}.

Currently, \bufGraph{} supports a list of queries commonly used in FPPs, including BFSs~\cite{shun2012brief}, DFSs (Depth-first searches)~\cite{tarjan1972depth}, SSSPs~\cite{dijkstra1959note}, PPRs~\cite{shun2016parallel}, and RWs (random walks)~\cite{alamgir2010multi}. On top of those queries, one can implement FPPs for applications like NCP, BC, and LL. Based on \bufGraph{}, users can also easily implement more \CR{fundamental} query types \CR{to support other} FPP-based graph applications.


\subsection{System Architecture}
Figure~\ref{fig:sysoverview} shows the architecture of \bufGraph{}. \bufGraph{} extends the Ligra framework by including its APIs (application programming interfaces), graph access methods, and the graph storage. We choose Ligra because of its high performance and wide adoptions by lines of excellent works, including \cite{shun2016parallel,dhulipala2018theoretically, shun2020practical, dhulipala2020graph}. Another reason is that users can leverage the friendly programming interfaces in Ligra. It provides two very simple APIs \textsc{vertexMap} and \textsc{edgeMap}, used for user-defined functions over vertices and edges, respectively, making programs in Ligra very simple and concise.

\SR{On top of Ligra, we expose two user-defined APIs including priority functor and yield functor to users for customizing the logic for inter-partition scheduling. We also add the inter-partition scheduling and intra-partition consolidation on top of Ligra's runtime. We reuse the efficient graph storage and the access methods to edges/vertices in Ligra, which adopts the popular CSR (Compressed Sparse Row) format to store a graph. We add graph partition and buffer management based on the efficient storage of Ligra.}




\begin{algorithm}[t]
    \small
    \caption{\bufGraph{}: FPP Processing on graph partitions $\bm{P}$.}
    \label{algo:overview}
    \begin{algorithmic}[1]
        \State \Call{\textcolor{black}{InitBuffers}}{$\bm{P}$, $Q$}\label{line:init2}
        \While{At least one buffer has operations}\label{line:fpp}
            \State {$\bm{P}_c \leftarrow$ \Call{ScheduleNextPart}{\null}}\Comment{see Sec.~\ref{sec:scheduling}}\label{line:inter}
            \State \Call{IntraPartProcess}{$\bm{P}_c$} \Comment{see Sec.~\ref{sec:cache}}\label{line:intra}
        \EndWhile
        \Procedure{ScheduleNextPart}{\null}
            \State nextP $\leftarrow$ get the next partition according to \Call{\textcolor{black}{PriorityFunctor}}{\null}
            \State \Return nextP \label{line:priority}
        \EndProcedure
        \Procedure{IntraPartProcess}{$\bm{P}_c$}
            \State \Call{Consolidate}{} operations in $\bm{P}_c$.buffer according to each query
            \ParForEach{$q$ $\in$ $Q$}\label{line:parexecute}\label{line:parfor}
                \For{op $\in$ $\bm{P}_c$.buffer.getOps($q$)}
                    \State newOps $\leftarrow$ \Call{\textcolor{black}{Compute}}{op, $\bm{P}_c$}\label{line:intra2}
                    \State Use newOps to update $\bm{P}_c$.buffer
                    \If{\Call{CanYield}{$\bm{P}_c$, $q$, \textsc{\textcolor{black}{YieldFunctor}}()}}\label{line:yield}
                        \State \Call{Yield}{\null} and goto Line~\ref{line:parfor} \Comment{terminating $q$, see Sec.~\ref{sec:scheduling}}
                    \EndIf
                \EndFor
            \EndFor
            \State Send operations to neighbor partitions \label{line:update}
        \EndProcedure
    \end{algorithmic}
\end{algorithm}

\subsection{Overall Execution Flow}
Algorithm~\ref{algo:overview} shows the overall execution flow of \bufGraph{} on handling an FPP. We assume that the graph is already partitioned, and the set of graph partitions are represented as $\bm{P}$. At Line~\ref{line:init2}, \bufGraph{} initializes \CR{a buffer, which is a dynamic-sized contiguous memory space, for each partition.} Also, we assign the FPP queries in $Q$ to the corresponding buffers in the initialization.

As long as there exists a non-empty buffer (Line~\ref{line:fpp}), \bufGraph{} invokes {\textsc{ScheduleNextPart}} to find the next partition $P_c$ to process. The scheduling is priority-based, targeting at convergence quickly. Next, \bufGraph{} processes the buffered operations in the scheduled partition $P_c$ by calling \textsc{IntraPartProcess} at Line~\ref{line:intra}. In \textsc{IntraPartProcess}, \bufGraph{} consolidates the operations in the buffer and assigns operations of the same query to a single thread so that the \textsc{Compute} procedure is in an atomic-free manner. Thus, we put a ``parallel for'' execution for different queries at Line~\ref{line:parexecute}. The processing of queries' operations can generate many new operations targeting at $P_c$ and $P_c$'s neighbor partitions. We have two cases for each of the new operations generated: 1) if it targets $\bm{P}_c$, we put it to the $\bm{P}_c$’s buffer; 2) if it targets other graph partitions, we do not send them to the buffers of their target partitions immediately. For each of the $(|\bm{P}|-1)$ partitions, we maintain a local buffer and store the operation into the local buffer. We send them in batches after finishing processing $\bm{P}_c$ (at Line~\ref{line:update}). 

At Line~\ref{line:yield}, \bufGraph{} monitors the processing of query $q$ and adopts the yielding optimization to terminate the processing of $q$ earlier for work efficiency. The yielding happens within the intra-partition processing, but controls the amount of the work spent in the current partition to trigger the scheduling of the next partition to process. This is similar to the concept of \emph{yield} in process scheduling in operating systems. 

\vspace{1mm}\noindent
\textbf{An example in buffered execution.} We use an example in Figure~\ref{fig:execution} to illustrate the buffered execution. There are 15 vertices divided into four partitions, and we set the weight of all edges to be one for simplicity purposes. We consider an \CR{SSSP-based} FPP such as LL, and use two SSSPs $q_1$ and $q_2$, from source vertices $\mathsf{A}$ and $\mathsf{I}$ in partitions $P_1$ and $P_3$, respectively. \bufGraph{} initializes these queries as operations in the corresponding buffers, as shown in Figure~\ref{fig:execution}a. Suppose $P_3$ is the first partition scheduled to process. Figure~\ref{fig:execution}b shows the state after processing $P_3$. \bufGraph{} sends operations to neighbor partitions, $P_1$ and $P_4$, as shown in their buffers $B_1$ and $B_4$. This process repeats until all buffers are empty.

\begin{figure}[t]
	\centering
	\begin{subfigure}[t]{0.44\textwidth}
		\centering
		\includegraphics[width=0.6\textwidth]{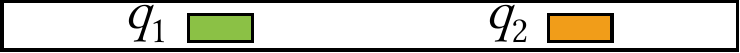}
	\end{subfigure}
	\begin{subfigure}[t]{0.236\textwidth}
		\centering
		\includegraphics[width=0.96\textwidth]{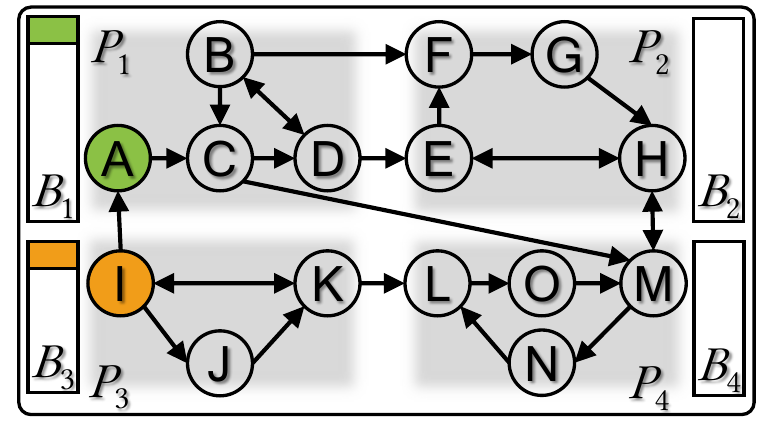}
		\captionsetup{aboveskip=0pt}
		\caption{Initialization.}
		\label{fig:exec1}
	\end{subfigure}\hfill%
	\begin{subfigure}[t]{0.236\textwidth}
		\centering
		\includegraphics[width=0.96\textwidth]{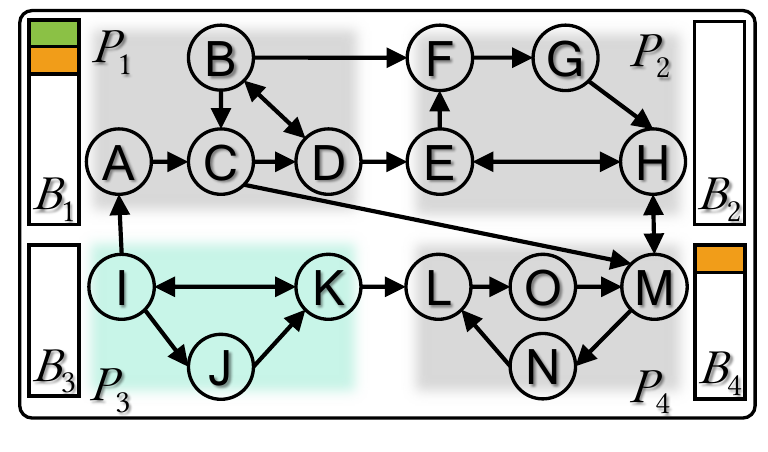}
		\captionsetup{aboveskip=0pt}
		\caption{Processed operations in $\bm{P_3}$.}
		\label{fig:exec2}
	\end{subfigure}
	\captionsetup{aboveskip=5pt}
	\caption{\bufGraph{} processes two SSSP queries $\bm{q_1}$ and $\bm{q_2}$ start from vertices $\bm{\mathsf{A}}$ and $\bm{\mathsf{I}}$, respectively, as highlighted. }
	\label{fig:execution}
\end{figure}

\section{Intra-Partition Processing}\label{sec:cache}
By applying the \bufModel{}, \bufGraph{} explores the opportunities of optimizing cache-efficient \CR{intra-partition} processing of FPPs by batching the operations to LLC-sized graph partitions. In this in-cache processing, work efficiency becomes essential, with the following two main issues. First, we find that the parallel execution model adopted by the current GPSs can be extremely inefficient for in-cache processing many operations. \CR{We give more details shortly in \secref{subsec:cache:seq}.} Second, processing many operations at the same time could cause severe synchronization overheads and conflicts if they are from the same query. For example, one operation may read the property of a vertex and the other operation from the same query may update the property of the same vertex, which causes read-write conflicts. We develop an atomic-free approach to eliminate the conflicts in ~\secref{subsec:cache:consolidate}.

\subsection{Sequential vs. Parallel Execution}\label{subsec:cache:seq}

The current GPSs~\cite{shun2013ligra, nguyen2013lightweight,zhu2016gemini,zhang2018graphit} use parallel algorithms to execute one query in order to take advantage of intra-query parallelism, \CR{which, however,} is not free. Firstly, it comes with the overhead in parallelization, such as thread synchronization, locking, and scheduling. Second, most of the parallel algorithms perform significantly more work than their sequential counterparts. Those overheads can become relatively more significant given the in-cache processing.

Since there are usually more operations than the number of available CPU threads in handling FPPs, we propose to leverage the inter-query parallelism. Particularly, we choose the sequential algorithm to execute each operation and execute multiple operations simultaneously, i.e., each thread fetches one operation at a time from the buffer and processes it using the corresponding sequential algorithm. Specifically, \bufGraph{} reuses sequential algorithms from existing works. The SSSP and BFS algorithms are obtained from the problem based benchmark suite (PBBS)~\cite{shun2012brief}, and we reuse the sequential PPR implementation from \cite{shun2016parallel}.

For a partition, if the buffer only has one operation and there are no other on-going operations, we can switch to parallel algorithms to process operation. However, we hardly observe this scenario happening and thus use sequential algorithms in most cases.

\begin{figure}[t]
    \centering
    \begin{subfigure}[t]{0.44\textwidth}
        \centering
        \includegraphics[width=0.55\textwidth]{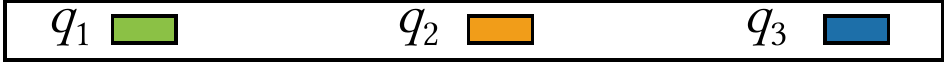}
    \end{subfigure}
    \begin{subfigure}[t]{0.22\textwidth}
        \centering
        \includegraphics[width=0.66\textwidth]{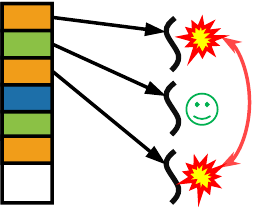}
        \caption{Without consolidation.}
        \label{fig:gather1}
    \end{subfigure}\hfill%
    \begin{subfigure}[t]{0.22\textwidth}
        \centering
        \includegraphics[width=0.88\textwidth]{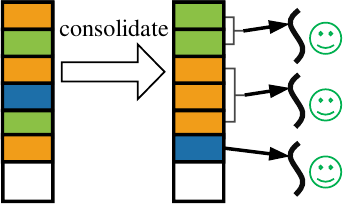}
        \caption{With consolidation.}
        \label{fig:gather2}
    \end{subfigure}
    \captionsetup{aboveskip=2pt}
    \caption{Comparison of the execution on buffered operations with and without consolidation.} 
    \label{fig:gather}
\end{figure}

\subsection{Query-centric Operation Consolidation}\label{subsec:cache:consolidate}

As multiple operations are processed simultaneously, different threads can process operations of the same query at the same time, which may cause access conflicts. Therefore, it requires locking and atomic operations when threads simultaneously read and write the query-specific data, e.g., the intermediate results and vertices' properties. This is also commonly seen in GPSs~\cite{shun2013ligra, nguyen2013lightweight,zhu2016gemini,zhang2018graphit} that parallelize the processing of a single query.

In this work, we propose an atomic-free approach that efficiently eliminates the conflicts among processing different operations from the same query. Particularly, \bufGraph{} \emph{consolidates} the operations based on the queries they belong to and uses a single thread to handle a set of consolidated operations from the same query. Multiple threads execute different FPP queries simultaneously, as shown in Algorithm~\ref{algo:overview} Line~\ref{line:parexecute}. The consolidation has the following benefits. First, \CR{operations of} the same query can be processed in an atomic-free manner since there is only one thread updating the query-specific data. Second, \CR{as the thread only processes data of the same query and the query-specific data is stored in a contiguous memory space, it avoids stride memory accesses to data of other queries.}

Figure~\ref{fig:gather} illustrates an example of the differences between processing with and without consolidation. We assume there are three threads. Without consolidation, threads fetch operations in the buffer and process them simultaneously. As shown in Figure~\ref{fig:gather1}, when two threads process the operations from the same query $q_2$, we need to apply atomic operations to resolve the potential conflicts (read-write conflicts). Figure~\ref{fig:gather2} shows that we perform the operation processing in an atomic-free manner by consolidating \CR{buffered operations and} assigning those operations belonging to the same query to one thread.

Given the user-defined priority functor, we can further reduce the redundant work based on query-centric consolidation. The priority functor relies on the logics of the sequential algorithms provided. During the consolidation, we maintain an order of operations from the same query according to the priority functor. In the execution, we always choose the one with the highest priority, which has more pruning power on redundant work. Particularly, commonly adopted SSSP implementations, e.g., Dijkstra's algorithm, rely on a priority queue (PQ) that assigns higher priorities to shorter paths. The algorithm uses shorter paths conveyed in the operations to prune non-optimal ones. The sequential PPR implementation from ~\cite{shun2016parallel} relies on a multiset structure that allows inserting all operations of the same query to the same queue for sequential processing in the decreasing order of the residual values. When solving a BFS, the priority value is the lowest level from the source in BFSs. 

\begin{figure}[t]
    \centering
    \begin{subfigure}[t]{0.23\textwidth}
        \centering
        \includegraphics[width=0.95\textwidth]{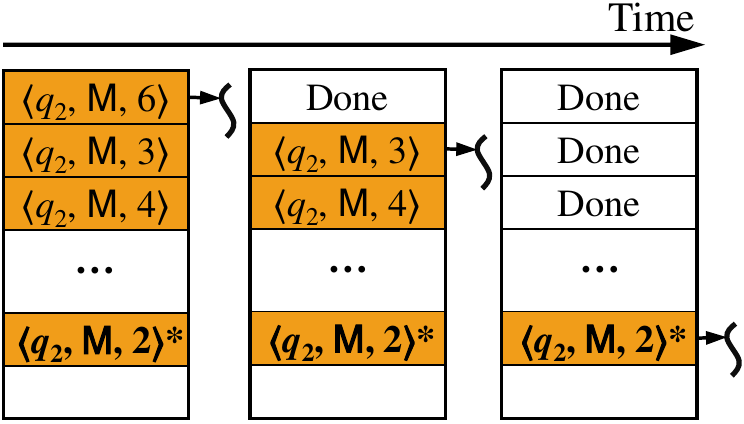}
        \captionsetup{aboveskip=2pt}
        \caption{Without priority functor.}
        \label{fig:redu1}
    \end{subfigure}\hfill%
    \begin{subfigure}[t]{0.24\textwidth}
        \centering
        \includegraphics[width=0.95\textwidth]{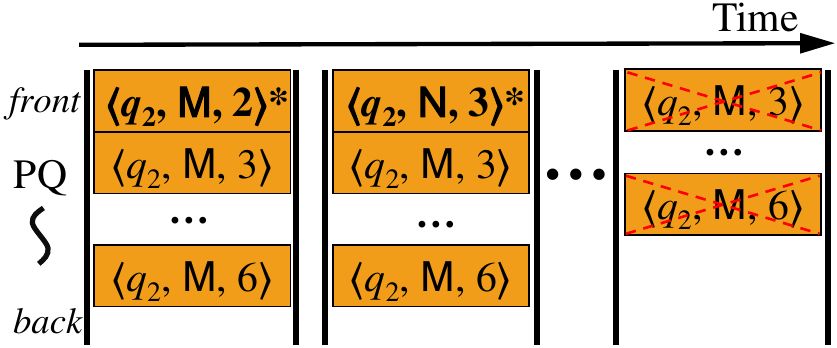}
        \captionsetup{aboveskip=2pt}
        \caption{With the priority functor.}
        \label{fig:redu2}
    \end{subfigure}
    \captionsetup{aboveskip=2pt}
    \caption{Comparison of the redundancy in processing operations with and without using the priority functor in SSSP. We highlight the operations with the optimal value using $\bm{*}$.}
    \label{fig:redu}
\end{figure}

Figure~\ref{fig:redu} gives an example of processing consolidated operations of an SSSP query, where we can leverage the user-defined priority functor to reduce the number of redundant operations. Without prioritizing, a thread is fetching the operations to process one by one. The operation with the most significant value can be in any place of the buffer. For example, $\langle q_2, \mathsf{M}, 2\rangle$ contains the optimal value and is \CR{queued after} many other operations in Figure~\ref{fig:redu1}. The operations before $\langle q_2, \mathsf{M}, 2\rangle$ are all redundant. In contrast, as shown in Figure~\ref{fig:redu2}, by leveraging the priority functor of Dijkstra's algorithm for solving SSSP~\cite{dijkstra1959note}, \bufGraph{} processes the most beneficial operation of the same query. Here, the processing of $\langle q_2, \mathsf{M}, 2\rangle$ in the first place can effectively prune the redundant ones.

\section{Inter-Partition Scheduling}\label{sec:scheduling}
With efficient intra-partition processing, we still need to decide the order of scheduling partitions to execute. We observe that a wrong execution order of partitions takes more steps to converge and causes momentous redundant work. \CR{In this section}, we focus on the following two issues of inter-partition scheduling that significantly affect the work efficiency of handling the entire FPP.

\emph{First, when should \bufGraph{} terminate intra-partition processing and switch to the next partition?} One basic approach is to finish all the operations within the current partition before switching to the next partition. However, this would cause a significant amount of redundant work. According to the usage of the priority functor in query-centric consolidation, the later operations of the same query tend to have diminishing contributions to the convergence of many graph problems~\cite{shun2013ligra, nguyen2013lightweight}. They can even be pruned by the future operations generated by other partitions. Correspondingly, we present a yielding optimization to decide the early termination of a query in intra-partition processing (\secref{sec:yield}).

\emph{Second, which partition should \bufGraph{} process the next?} We propose a priority-based scheduling to select the partition that could lead to quick convergence as the next to process (\secref{sec:priority}).

\subsection{Heuristic-Based Yielding}\label{sec:yield}
The yielding optimization partially processes a partition to avoid redundant work, i.e., early termination for intra-partition processing. Determining the optimal strategy for yielding on early termination is impractical due to the complexity of graph structures and the convergence of graph applications. For example, one question here is: how to determine the utility (or benefits) of executing an operation to the convergence of the entire application. Existing studies~\cite{hassaan2011ordered, zhang2020optimizing, shun2016parallel} rely on heuristics, such as priority, to approximate the utility. Our problem is more challenging and more complex than this question. Therefore, like the previous studies, we empirically use two heuristics in \bufGraph{} to decide if a thread should yield the processing of a query. The threshold values of these heuristics can be tuned and adjusted by users. We experimentally evaluate their impacts in the evaluation.

\emph{Yielding heuristic 1: on the number of edges processed.} The first heuristic is to examine the number of edges processed since the start of processing in the partition and yield if the number is beyond a threshold. For example, the processing of a PPR gradually converges into local stable states with more edges processed but could be easily fluctuated by operations sent from other partitions in later steps. Similarly, an SSSP process goes deeply in a subgraph with more paths enumerated via processing edges, where the chance of generating non-optimal paths is higher since there are more routes from neighbor partitions that could lead to shorter ones. Such operations tend to be redundant. \bufGraph{} reduces such operations by limiting the number of edges processed.

\emph{Yielding heuristic 2: on the operations' values updated.} The second heuristic is to check if the values updated so far in the partition exceeds a range, inspired by the $\Delta$-stepping and similar approaches~\cite{meyer2003delta, blelloch2016parallel}. The heuristic works in this way: for each query in the partition, we record the value $\alpha$ conveyed from the first operation we execute. Based on the intra-partition processing, this operation has the most benefits to the convergence. If the currently processed operation's value is much worse (greater or smaller) than $\alpha$ value by a factor or a threshold, we should yield the processing.

For example, when solving SSSPs, if the shortest path to process in the current partition is greater than $dist_{min}+\Delta$, where $dist_{min}$ is the distance from the source vertex to the closest unprocessed vertices, there is a high chance that there would be better paths~\cite{meyer2003delta, zhang2020optimizing} that are not yet exploited. $\Delta$ is a tunable parameter of the $\Delta$-stepping algorithm, and the algorithm restricts the processing to vertices whose distances from the source are within $[dist_{min}, dist_{min}+\Delta)$. Similarly, in solving PPRs, if the highest residual of vertices in the partition is not significant enough, instead of continuing processing for local stable states, a better solution is to yield and wait for more influencing operations propagated to this partition.

\SR{The yielding heuristics only pause the processing of the current query in the partition. After a query yields, the unprocessed operations (including newly generated ones) for that query are kept in the partition's buffer and processed later. The processing will eventually converge as the standard algorithms and the correctness of processing results are guaranteed.}

\SR{Figure~\ref{fig:yield} shows the comparison of the execution and the number of operations during processing with and without yielding. The figure reflects the states of finishing processing $P_1$ and switching to $P_2$. In Figure~\ref{fig:yield1}, \bufGraph{} switches to $P_2$ when it finishes all operations in $P_1$ and sends three operations to $P_2$. It then schedules to process $P_2$. We can expect that \bufGraph{} has to revisit and update vertex $\mathsf{E}$ in $P_1$ again because the shortest path from $\mathsf{A}$ to $\mathsf{E}$ will be found via vertex $\mathsf{D}$.}

\SR{In Figure~\ref{fig:yield2}, \bufGraph{} yields the process after vertex $\mathsf{H}$, stores the operation at $\mathsf{H}$ in $P_1$, and sends one operation to $P_2$. When \bufGraph{} finishes the execution in $P_2$, it sends the shortest path update in an operation $\langle q_1, \mathsf{E}, 3\rangle$ to $\mathsf{E}$. Compared with the execution without yielding, the redundant operations $\langle q_1, \mathsf{D}, 5\rangle$ and $\langle q_1, \mathsf{F}, 5\rangle$ are pruned. \bufGraph{} will resume to process the remaining operation $\langle q_1, \mathsf{H}, 3\rangle$ in $P_1$'s buffer together with operations sent from $P_2$. In this way, we reduce the redundant updates and guarantee the exact results.}



To summarize, yielding can significantly reduce the redundant operations within the current partition and the redundant operations propagated to neighbor partitions. Furthermore, we give a theoretical proof to show that the yielding helps the work efficiency of FPP queries processing
\ifdefined\complete{}
in \appref{app:work}.
\else
in the complete version~\cite{forkgraph2021}.
\fi

\begin{figure}[t]
	\centering
	\begin{subfigure}[t]{0.225\textwidth}
		\centering
		\includegraphics[width=0.92\textwidth]{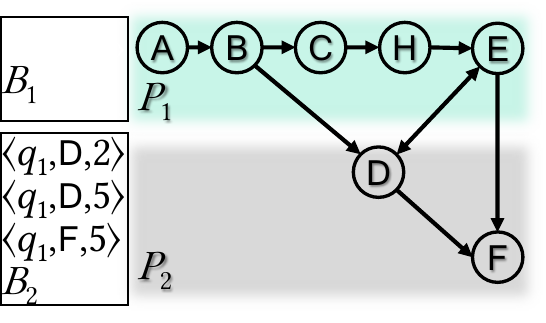}
		\captionsetup{aboveskip=0pt}
		\caption{Finish $\bm{q_1}$ in $\bm{P_1}$ w/o yielding.}
		\label{fig:yield1}
	\end{subfigure}\hfill%
	\begin{subfigure}[t]{0.235\textwidth}
		\centering
		\includegraphics[width=0.85\textwidth]{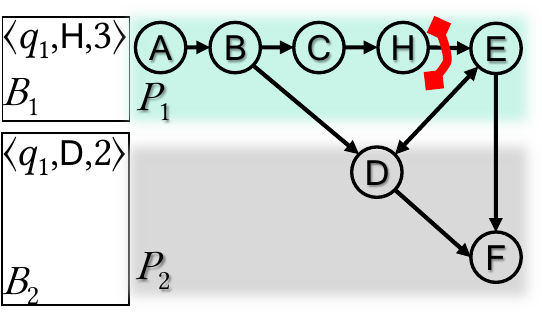}
		\captionsetup{aboveskip=0pt}
		\caption{Yield $\bm{q_1}$ at edge $\bm{\mathsf{H}}$ to $\bm{\mathsf{E}}$.}
		\label{fig:yield2}
	\end{subfigure}
	\caption{Comparison of the execution and number of operations with and without yielding in $\bm{P_1}$. Shorest path query $\bm{q_1}$ starts at vertex $\bm{\mathsf{A}}$ in $\bm{P_1}$. All edges are with unit lengths.}
	\label{fig:yield}
\end{figure}

\subsection{Priority-Based Scheduling}\label{sec:priority}
When \bufGraph{} finishes the processing of a partition, inter-partition scheduling selects another partition with a non-empty buffer to process. A wrong execution order of graph partitions leads to the repeated revisiting of partitions. To avoid such inefficiency, we propose a priority-based scheduling that aims to pick a partition that can lead to quick convergence of the FPP processing to process. We assign each partition with a priority value based on the priority functor. Intuitively, some partitions are buffering operations that would quickly lead to the convergence of queries, e.g., the shortest path or the most effective value changes in PPR updates. Therefore, we prefer to process these partitions' buffered operations than others for quick convergence.

The key question is how to determine the priority value of each partition. \SR{The priority of a partition is defined to be the highest priority value among all the operations in the partition and the priority values are generated in the priority functor.} Like many existing studies, the priority functor in \bufGraph{} is defined on per operation (i.e., per vertex), rather than on a set of vertices. For example, Dijkstra's algorithm for SSSP takes shorter distances as higher priorities~\cite{dijkstra1959note}, and it always uses the shortest path to update other vertices. As there can be many buffered operations in a partition, selecting the value with the highest priority is a simple and effective approach to determine each partition's priority.

\CR{With a given priority functor}, \bufGraph{} always schedules the partition with the highest priority in the graph to process next. However, the scheduled partition might be the most desired partition for only a subset of \CR{FPP queries}, but not for all. Redundant operations may be generated when executing operations \CR{of queries that desire other partitions}. To deal with the redundancy, we control the amount of work spent by each query using the proposed yielding optimization. We theoretically prove that \bufGraph{} is work-efficient on handling FPP queries and one of the reasons is that the yielding optimization effectively reduces such redundancy.
\ifdefined\complete{}
We present the proof in \appref{app:work}.
\else
Due to the space limit, we put the proof in the appendix of the complete version of this paper~\cite{forkgraph2021}.
\fi

\begin{figure}[t]
	\centering
	\begin{subfigure}[t]{0.42\textwidth}
		\centering
		\includegraphics[width=\textwidth]{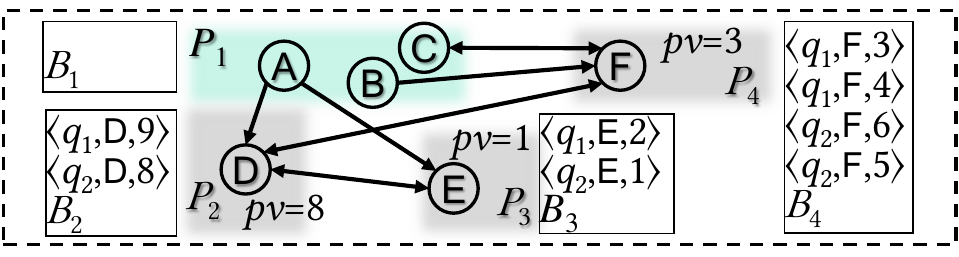}
	\end{subfigure}
	\begin{subfigure}[t]{0.42\textwidth}
		\centering
		\includegraphics[width=\textwidth]{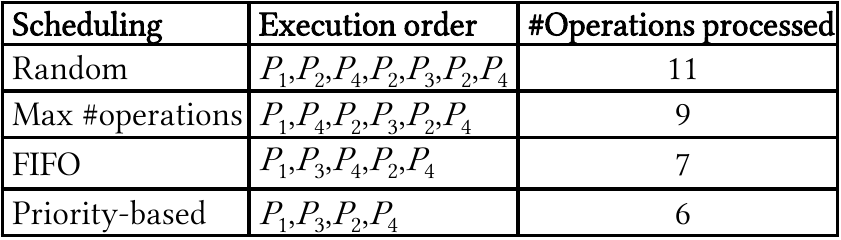}
	\end{subfigure}
	\caption{The execution orders under different scheduling methods. Shortest path queries $\bm{q_1}$ and $\bm{q_2}$ start in $\bm{P_1}$. Only vertices with edges crossing partitions are shown for brevity. All edges are with unit lengths.}
	\label{fig:priority}
\end{figure}

\SR{Figure~\ref{fig:priority} shows the comparison of the execution orders using different scheduling methods. $pv$ denotes the priority value of each partition at this stage. The random scheduling picks an arbitrary partition with operations to process at each step, which could take more steps than other methods, resulting in slow convergence. The heuristic of picking up the partition with the most number of operations (denoted as ``Max \#operations'' in the figure) could maximize the reuse of cache content. However, our experiments in \secref{subsec:eval:individual} show that it is slower than other methods because of involving more redundant work. The FIFO-based scheduling visits partitions based on the orders of operation generated. Compared with these methods, leveraging the priority functors from Dijkstra's algorithm, \bufGraph{} can schedule the execution orders to maximize the exploitation of shortest paths and reduce redundant work. In this example, the priority-based scheduling method shows the smallest number of visits to partitions and the least number of operations processed.}
\ifdefined\complete{}
\SR{The detailed work-through of this example is shown in \appref{app:walk-through}.}
\else
\SR{We provide the detailed work-through of this example in the appendix of the complete version~\cite{forkgraph2021}.}
\fi

\CR{The priority functors are programmed by users. Sometimes, it can be non-trivial to develop a functor for a certain graph operation. Fortunately, in the decades of research on graph processing, priority functors have been developed for many graph algorithms~\cite{shun2012brief, dijkstra1959note, shun2016parallel}, especially in a wide range of existing sequential algorithms. Thus, users can reuse the implementation of those priority functors as what we did in the experiments. By default, \bufGraph{} uses FIFO queues to schedule partitions.}

\section{Experimental Evaluation} \label{sec:exp}
In this section, we evaluate \bufGraph{} on handling FPPs on real-world graphs compared with state-of-the-art GPSs.

\subsection{Experimental Setup}\label{sec:exp-setup}
\textbf{Hardware configuration.}
We conduct experiments on a Linux server with a 10-core Intel\textsuperscript{\textregistered} XEON\textsuperscript{\textregistered} W-2155 CPU (hyperthreading disabled) and 256GB memory. The frequency of the CPU is 3.3GHz, and the LLC is 13.75MB. We compile all the implementations using g++ 7.5.0 with -O3 flag and OpenMP enabled.

\vspace{1mm}\noindent
\textbf{Implementation Details.}
We develop \bufGraph{} in C++ with OpenMP, where most of the components added to Ligra are developed by reusing existing GPSs' primitives or using the C++ Standard Template Library (STL). For inter-partition scheduling, we adopt the priority queue container in STL because the scheduling workload is not heavy (there are at most $|\bm{P}|$ elements maintained, each of which is the priority of a partition, where $|\bm{P}| << |V|$); as discussed early, we directly adopt the priority functors and yielding functors from the state-of-the-art sequential implementations~\cite{meyer2003delta, shun2016parallel, dijkstra1959note}. For intra-partition processing, \bufGraph{} fixes the total number of available threads as the number of hardware threads to \CR{ensure high CPU utilization, as well as} avoid high context switches and process migration overhead.

In the buffer management, we develop a simple and efficient multi-bucket structure to buffer operations, where we allocate $K$ \CR{($K \leq |Q|$)} independent buckets for each partition's buffer. We equally divide the \CR{$|Q|$} queries into $K$ corresponding groups and assign each group to explicitly use one of the buckets in each buffer, reducing the overhead of consolidating different queries' operation. In our evaluation, we set $K$ to be much larger than the number of cores in the system to allow fine-grained workload allocation. We implement the bucket using a parallel vector structure from GraphIt's code base~\cite{zhang2018graphit}, which dynamically adjusts capacities with a tunable growth factor, to make each bucket \CR{dynamic-sized and in contiguous memory}, without wasting much memory.



For graph partitioning, we mostly use METIS~\cite{karypis1998fast}, one of the state-of-the-art edge-cut tools, to pre-process the graph with the objective of minimizing the total number of edges across different partitions. Since METIS shows poor partitioning quality on large-scale social network graphs~\cite{wei2016speedup}, we randomly partition these graphs \CR{into parts with equal number of vertices}.

\SR{In the evaluation, we use the priority \CR{functor} in Dijkstra's algorithm~\cite{dijkstra1959note}, and set yielding heuristic 2 as guided in the $\Delta$-stepping algorithm~\cite{meyer2003delta} for BC and LL; we use the priority \CR{functor} in~\cite{shun2016parallel},}
\ifdefined\complete{}
\SR{and set yielding heuristic 1 as guided in \appref{app:work} for NCP.}
\else
\SR{and set yielding heuristic 1 as guided in the appendix of \cite{forkgraph2021} for NCP.}
\fi
\SR{Particularly, the \CR{priority-based scheduling} is implemented as priority queue with binary comparison functors (comparators) for all the three applications evaluated in this paper. In the evaluation of BC and LL, the functors compare two operations and return ``true'' if the path length carried by the first operation is shorter than that carried by the second. For NCP, the functor returns ``true'' if the residual carried by the first operation is higher than the one carried by the second. This setup is used for all the tests in the rest of this paper, unless specified otherwise.}

\vspace{1mm}\noindent
\textbf{System settings.}
\SR{The system settings contain three major \CR{perspectives}. First, given a machine, for any given graph, the partition size is fixed to be the LLC size. In other words, $|\bm{P}|$ is calculated as $G.size/LLC.size$. Second, users can provide the priority functors. Typically, users obtain the priority functors from existing sequential algorithms. If there is no functor provided, \bufGraph{} uses a FIFO queue for processing and scheduling by default. Third, we set the yielding heuristics based on the work efficiency analysis. \CR{In this way,} \bufGraph{} shows comparable or the best performance among different settings, \CR{as shown later in Table~\ref{tab:priority}}. Although \CR{these settings} may not achieve the best-case performance, they are sufficiently good in practice.}

\vspace{1mm}\noindent
\textbf{Comparisons.} We compare \bufGraph{} to the other three representative GPSs, Ligra~\cite{shun2013ligra}, Gemini~\cite{zhu2016gemini}, and GraphIt~\cite{zhang2018graphit}. Ligra has the fastest implementation of many algorithms, as it is still actively maintained~\cite{shun2016parallel, dhulipala2018theoretically}. Gemini is a distributed graph processing system with notable shared-machine performance. Gemini is compiled and tested with message passing functions disabled. We also compare with GraphIt~\cite{zhang2018graphit}, the state-of-the-art DSL (domain-specific language) for high-performance graph analytics. All three systems allow users to explore various optimization and tradeoffs (such as push vs. pull, and dense vs. sparse frontier). \SR{In our evaluation, all the systems are carefully tuned and tested with different configurations. Particularly, the tuned configurations include the direction switch threshold in Ligra and Gemini, the scheduling in GraphIt, the yielding and priority-based scheduling in \bufGraph{}. We present the best result for all systems among those tests.}

As \CR{mentioned} in the Introduction, we use $t$ to denote the number of threads assigned to a query in Ligra, Gemini, and GraphIt. Particularly, when $t=10$ and FPP queries are executed one by one (with intra-query parallelism \CR{on ten cores}), we denote these three GPSs as Ligra ($t=10$), Gemini ($t=10$), and GraphIt ($t=10$), respectively. When $t=1$, we run each query independently using one hardware thread and use OpenMP's dynamic scheduling to allow each GPS to fetch and process FPP queries without synchronizations. We denote those \CR{implemenations} as Ligra ($t=1$), Gemini ($t=1$), and GraphIt ($t=1$), respectively.

We measure the time for each system to complete the FPP queries processing. This measurement excludes the time spent reading data from the disk because we focus on optimizing the in-memory computation. For each of the FPP-based graph applications evaluated, we generate three testing sets per graph, representing three testing instances of an application on a graph. We run all tests five times and report the average execution time of the $3\times5=15$ tests. Note that there are a few FPP queries in each test, representing a single instance of an FPP application. Details are given below.

\vspace{1mm}\noindent
\textbf{Applications.} We evaluate competing systems' performance on three applications, BC, LL, and NCP. To keep consistent with previous work~\cite{gera2020traversing, shun2016parallel, akiba2013fast}, we configure the three applications as follows.
\begin{itemize}[wide,noitemsep,topsep=0pt,leftmargin=0pt]
    \item The original \emph{BC} performs one SSSP from each vertex for a weighted graph (for unweighted graphs, one BFS from each vertex). This is too time-consuming to be practical. Instead, we adopt an approximate approach by Eppstein et al.~\cite{wang2001fast}. The algorithm samples the starting nodes from the graph. Therefore, in our evaluation, we randomly sample a batch of $100$ source vertices for each graph~\cite{gera2020traversing}.
    \item \SR{The testing of \emph{NCP} follows \cite{shun2016parallel}. Each NCP requires running PPRs using a seeding of $0.01\%$ of the vertices that are randomly sampled in the target graph.}
    \item \SR{The testing of \emph{LL} follows \cite{akiba2013fast}; each LL requires executing 1,024 independent SSSPs from source vertices that are randomly sampled in the target graph.}
\end{itemize}

\vspace{1mm}\noindent
\textbf{Data sets.} All the data sets are publicly available and widely used in the previous literature~\cite{zhang2018graphit, shun2013ligra, zhu2016gemini} to benchmark algorithms and frameworks. The data sets are listed in Table~\ref{tab:datasets}. Following the experimental studies in \cite{shun2013ligra} and \cite{dhulipala2017julienne}, we create weighted graphs by selecting edge weights between $[1, log|V|)$ uniformly at random. \texttt{Ca}, \texttt{Us}, and \texttt{Eu} are road networks, \texttt{Or}, \texttt{Lj}, and \texttt{Tw} are social networks, \texttt{Wk} is a hyperlink network, and \texttt{Pt} is a citation network.

\vspace{1mm}\noindent
\textbf{Experimental outline.} We first present the overall comparison between \bufGraph{} and other GPSs in \secref{subsec:eval:overall}. In \secref{subsec:eval:cache}, we present the profiling results on cache performance and work efficiency. In \secref{subsec:eval:individual}, we evaluate the impacts of individual techniques and system tuning. \CR{Due to the space limit,} we summarize some other results in \secref{subsec:eval:other} and present the details
\ifdefined\complete{}
in the appendix.
\else
in the appendix of the complete version~\cite{forkgraph2021}.
\fi

\setlength{\textfloatsep}{1pt}
\begin{table}[t]
    \caption{Input graphs ($\overline{d}$ is the average degree).}
    \label{tab:datasets}

\footnotesize
\centering
\resizebox{0.48\textwidth}{!}{
  \begin{tabular}{|crrrrrr|}
    \Xhline{2\arrayrulewidth}
    \bf{Graph} & \bf{Source} & \multicolumn{1}{c}{$\bm{\#V}$} & \multicolumn{1}{c}{$\bm{\#E}$} & \multicolumn{1}{c}{{$\bm{\overline{d}}$}} & \multicolumn{1}{c}{\SR{\bf{Memory}}} & \multicolumn{1}{c|}{$\bm{|P|}$}\\
    \hline
    \texttt{Ca} & {California}~\cite{demetrescu20089th} & 1.9M & 4.6M & 2.4 & 0.07GB & 5\\
    \texttt{Us} & {USA}~\cite{demetrescu20089th} & 23.9M & 57.7M & 2.4 & 0.82GB & 62\\
    \texttt{Eu} & {Europe}~\cite{bader201110th} & 50.9M & 0.1B & 2.1 & 1.65GB & 120\\
    \texttt{Or} & {Orkut}~\cite{snapnets} & 3.1M &0.1B & 38.1 & 1.37GB & 100\\
    \texttt{Wk} & {Wikipedia}~\cite{davis2011university} & 3.6M & 45.0M & 12.6 & 0.54GB & 40\\
    \texttt{Lj} & {LiveJournal}~\cite{snapnets} & 4.8M & 87.5M & 18.0 & 1.04GB & 76\\
    \texttt{Pt} & {Patents}~\cite{snapnets} & 16.5M & 33.0M & 2.0 & 0.50GB & 37\\
    \texttt{Tw} & {Twitter}~\cite{Kwak10www} & 61.6M & 1.5B & 23.8 & 17.27GB & 1256\\
    \Xhline{2\arrayrulewidth}
  \end{tabular}
}

\end{table}

\subsection{Overall Performance Comparison}\label{subsec:eval:overall}
\begin{figure}[t]
    \centering
    \begin{subfigure}[t]{0.49\textwidth}
        \centering
        \begin{tikzpicture}
    \begin{axis}[
        legend cell align={left},
        y tick label style={
            /pgf/number format/fixed,
            /pgf/number format/fixed zerofill,
            /pgf/number format/precision=1,
        },
yticklabel style = {font=\footnotesize},
xticklabel style = {font=\footnotesize},
ylabel style={font=\footnotesize},
xlabel style={font=\footnotesize},
        ytick distance = 0.5,
        ymax = 3,
        ymin = 0,
        enlarge x limits=0.15,
        width=84mm,
        height=36mm,
        ylabel=Normalized execution time,
        ybar=0pt,
        bar width=3pt,
        xtick = {1, 2, 3, 4, 5, 6, 7, 8},
        xticklabels = {\texttt{Ca}, \texttt{Us}, \texttt{Eu}, \texttt{Or}, \texttt{Wk}, \texttt{Lj}, \texttt{Pt}, \texttt{Tw}},
        extra y ticks = 1,
        extra y tick labels={},
        extra y tick style={grid=major,major grid style={dashed, draw=red}},
        legend columns=4,
        legend style={
            at={(0.5, 1.26)},
            anchor=north,
            draw=none,
            font=\footnotesize,
        },
        visualization depends on=y \as \rawy,
        visualization depends on=x \as \rawx,
        legend image code/.code={\draw [#1] (0cm,-0.1cm) rectangle (0.1cm,0.1cm);},
        nodes near coords style={
            font=\tiny,
            /pgf/number format/fixed,
            /pgf/number format/fixed zerofill,
            /pgf/number format/precision=2,
            anchor=south,
            xshift=3.65\pgfkeysvalueof{/pgf/bar width},
            yshift=-1.65\pgfkeysvalueof{/pgf/bar width},
        },
    ]
    \addplot[fill=\ClgC] table [x=index, y=Ligra-C, col sep=space] {plot/BC.csv};
    \addlegendentry{Ligra ($t=1$)}
    \addplot[fill=\CgmC] table [x=index, y=Gemini-C, col sep=space] {plot/BC.csv};
    \addlegendentry{Gemini ($t=1$)}
    \addplot[fill=\Cgt] table [x=index, y=GraphIt-S, col sep=space] {plot/BC.csv};
    \addlegendentry{GraphIt ($t=10$)}
    \addplot[fill=black, nodes near coords] table [x=index, y=bufGraph, col sep=space] {plot/BC.csv};
    \addlegendentry{\bufGraph{}}
    \node[fill=white, xshift=-2pt, rotate=0, font=\bfseries\tiny] at (axis cs:1, 2.65) {5.48};
    \node[fill=white, xshift=-2pt, rotate=0, font=\bfseries\tiny] at (axis cs:2, 2.65) {14.30};
    \end{axis}
\end{tikzpicture}
        \captionsetup{aboveskip=-2pt}
        \caption{BC}
        \label{fig:BC}
    \end{subfigure}
    \begin{subfigure}[t]{0.49\textwidth}
        \centering
        \begin{tikzpicture}
    \begin{axis}[
        y tick label style={
            /pgf/number format/fixed,
            /pgf/number format/fixed zerofill,
            /pgf/number format/precision=1,
        },
yticklabel style = {font=\footnotesize},
xticklabel style = {font=\footnotesize},
ylabel style={font=\footnotesize},
xlabel style={font=\footnotesize},
        ytick distance = 0.2,
        enlarge x limits=0.2,
        ymax = 1.1,
        ymin = 0,
        width=84mm,
        height=36mm,
        ylabel=Normalized execution time,
        ybar=0pt,
        bar width=3pt,
        xtick = {1, 2, 3, 4, 5},
        xticklabels = {\texttt{Or}, \texttt{Wk}, \texttt{Lj}, \texttt{Pt}, \texttt{Tw}},
        extra y ticks = 1,
        extra y tick labels={},
        extra y tick style={grid=major,major grid style={dashed, draw=red}},
        legend columns=4,
        legend style={
            at={(0.5, 1.26)},
            anchor=north,
            draw=none,
            font=\footnotesize,
        },
        visualization depends on=y \as \rawy,
        visualization depends on=x \as \rawx,
        legend image code/.code={\draw [#1] (0cm,-0.1cm) rectangle (0.1cm,0.1cm);},
        nodes near coords style={
            font=\tiny,
            /pgf/number format/fixed,
            /pgf/number format/fixed zerofill,
            /pgf/number format/precision=2,
            anchor=south,
            xshift=3.8\pgfkeysvalueof{/pgf/bar width},
        },
    ]
    \addplot[fill=\ClgC] table [x=index, y=Ligra-C, col sep=space] {plot/NCP.csv};
    \addlegendentry{Ligra ($t=1$)}
    \addplot[fill=\CgmC] table [x=index, y=Gemini-C, col sep=space] {plot/NCP.csv};
    \addlegendentry{Gemini ($t=1$)}
    \addplot[fill=\CgtC] table [x=index, y=GraphIt-C, col sep=space] {plot/NCP.csv};
    \addlegendentry{GraphIt ($t=1$)}
    \addplot[fill=black, nodes near coords] table [x=index, y=bufGraph, col sep=space] {plot/NCP.csv};
    \addlegendentry{\bufGraph{}}
    \end{axis}
\end{tikzpicture}
        \captionsetup{aboveskip=-2pt}
        \caption{NCP}
        \label{fig:NCP}
    \end{subfigure}
    \begin{subfigure}[t]{0.49\textwidth}
        \centering
        \begin{tikzpicture}
    \begin{axis}[
        y tick label style={
            /pgf/number format/fixed,
            /pgf/number format/fixed zerofill,
            /pgf/number format/precision=1,
        },
yticklabel style = {font=\footnotesize},
xticklabel style = {font=\footnotesize},
ylabel style={font=\footnotesize},
xlabel style={font=\footnotesize},
        ytick distance = 0.5,
        enlarge x limits=0.2,
        ymax = 3,
        ymin = 0,
        width=84mm,
        height=36mm,
        ylabel=Normalized execution time,
        ybar=0pt,
        bar width=3pt,
        xtick = {1, 2, 3, 4, 5},
        xticklabels = {\texttt{Ca}, \texttt{Us}, \texttt{Eu}, \texttt{Wk}, \texttt{Pt}},
        extra y ticks = 1,
        extra y tick labels={},
        extra y tick style={grid=major,major grid style={dashed, draw=red}},
        legend columns=4,
        legend style={
            at={(0.5, 1.26)},
            anchor=north,
            draw=none,
            font=\footnotesize,
        },
        visualization depends on=y \as \rawy,
        visualization depends on=x \as \rawx,
        legend image code/.code={\draw [#1] (0cm,-0.1cm) rectangle (0.1cm,0.1cm);},
        nodes near coords style={
            font=\tiny,
            /pgf/number format/fixed,
            /pgf/number format/fixed zerofill,
            /pgf/number format/precision=2,
            anchor=south,
            xshift=3.8\pgfkeysvalueof{/pgf/bar width},
        },
    ]
    \addplot[fill=\ClgC] table [x=index, y=Ligra-C, col sep=space] {plot/LL.csv};
    \addlegendentry{Ligra ($t=1$)}
    \addplot[fill=\CgmC] table [x=index, y=Gemini-C, col sep=space] {plot/LL.csv};
    \addlegendentry{Gemini ($t=1$)}
    \addplot[fill=\Cgt] table [x=index, y=GraphIt-S, col sep=space] {plot/LL.csv};
    \addlegendentry{GraphIt ($t=10$)}
    \addplot[fill=black, nodes near coords] table [x=index, y=bufGraph, col sep=space] {plot/LL.csv};
    \addlegendentry{\bufGraph{}}
    \node[fill=white, xshift=-2pt, rotate=0, font=\bfseries\tiny] at (axis cs:1, 2.65) {5.86};
    \node[fill=white, xshift=-2pt, rotate=0, font=\bfseries\tiny] at (axis cs:2, 2.65) {26.07};
    \end{axis}
\end{tikzpicture}
        \captionsetup{aboveskip=-2pt}
        \caption{LL}
        \label{fig:LL}
    \end{subfigure}
    \captionsetup{aboveskip=3pt}
    \caption{Overall execution time for the three applications with different implementations. \SR{We carefully tune all the systems and only report the best performance.}}
    \label{fig:exp-overall}
\end{figure}

\noindent\fbox{\parbox{\linewidth}{\textbf{Finding (1):} \bufGraph{} significantly outperforms Ligra, Gemini, and GraphIt in different execution schemes by $32\times$, $307\times$, and $38\times$ speedups on average, respectively.}}

Figure~\ref{fig:exp-overall} shows the performance of the applications with different implementations. For Ligra, Gemini, and GraphIt, we evaluate different threading configurations ($t$), and only show the best results for brevity. As the execution time of different test cases varies significantly, we present the normalized execution time to the performance of Ligra ($t=1$). The normalized performance of \bufGraph{} is annotated on the plots.

In Figures~\ref{fig:BC} and \ref{fig:LL}, since Ligra ($t=1$), Gemini ($t=1$), and GraphIt ($t=10$) outperform Ligra ($t=10$), Gemini ($t=10$), and GraphIt ($t=1$), respectively, for almost all the cases, we omit the results for brevity. Similarly, we omit the results of Ligra ($t=10$), Gemini ($t=10$), and GraphIt ($t=10$) in Figure~\ref{fig:NCP}.

Overall, \bufGraph{} significantly outperforms the other three GPSs in different schemes on all the tested applications with two orders of magnitude speedups on average. Besides, we show later that \bufGraph{} reduces the number of LLC misses by more than a factor of $10\times$. We make the following observations in comparison with each of state-of-the-art GPSs.

First, \bufGraph{} shows $51\times$ speedups over Ligra ($t=10$) and $32\times$ over Ligra ($t=1$) on average.  \bufGraph{} accelerates the convergence within partitions with low cache thrashing and adopts the sequential implementation to reduce the total workload.

Second, Gemini's implementations suffer from high synchronization overhead in each iteration because it is designed with the message passing mechanism for a distributed setting. Although all the message-passing functions are disabled in our evaluation, the materialization overhead between consecutive iterations is significant. \bufGraph{} handles the operations of FPP queries using fast implementations of sequential algorithms, which incurs minimal overhead in synchronization and atomic operations. As a result, \bufGraph{} delivers three orders of magnitude speedups over Gemini, especially on road graphs with high diameters.

Third, although GraphIt optimizes a single query's cache usage, it shows higher contention with more threads enabled. Except for solving PPR, GraphIt shows a slowdown when $t=1$. \SR{The reason is that the graph-traversal based queries in BC and LL benefit the cache optimization provided in GraphIt, while the high LLC misses due to uncoordinated memory accesses is too high to be covered by the performance gain when leveraging the inter-query parallelism in GraphIt ($t=1$).} Compared with GraphIt, \bufGraph{} aims to optimize the performance in the inter-query parallelism setting and achieves up to $197\times$ speedups over the best of GraphIt under different schemes. \bufGraph{} slightly outperforms GraphIt ($t=10$) on social networks by $1.3\times$ on solving BC because GraphIt ($t=10$) generates more efficient direction-optimized traversals by searching through a much larger space of optimizations. On data sets that do not rely on direction optimization, \bufGraph{} achieves more than $100\times$ speedups over both GraphIt ($t=10$) and GraphIt ($t=1$).

\begin{table}[t]
    \captionsetup{aboveskip=0pt}
    \caption{Execution time and \SR{memory consumption} of NCP on different data sets using the four systems.}
    \label{tab:time}

\small
\centering
    \begin{tabular}{|lrrrrr|}
        \multicolumn{6}{c}{\textbf{A. Execution time (minutes)}}\\
        \Xhline{2\arrayrulewidth}
         & \texttt{Or}  & \texttt{Wk}    & \texttt{Lj}     & \texttt{Pt}     & \texttt{Tw}\\ \hline
        Ligra ($t=10$)  & 0.7 & 10.1 & 23.5 & 28.9 & 692.0 \\
        Ligra ($t=1$)   & 0.7 & 8.1  & 19.8 & 17.8 & 243.3 \\\hline
        Gemini ($t=10$) & 0.7 & 4.2  & 7.2  & 13.2 & 230.8 \\
        Gemini ($t=1$)  & 0.7 & 2.5  & 4.5  & 8.9  & 187.4 \\\hline
        GraphIt ($t=10$)& 0.6 & 3.5  & 7.1  & 11.2 & 198.8 \\
        GraphIt ($t=1$) & 0.5 & 2.4  & 5.4  & 9.3  & 181.9 \\\hline
        \bufGraph{}     & 0.4 & 0.6  & 0.7  & 2.3  & 11.7\\
        \Xhline{2\arrayrulewidth}
        \multicolumn{6}{c}{\SR{\textbf{B. Memory consumption (GB)}}}\\
        \Xhline{2\arrayrulewidth}
        & \texttt{Or}  & \texttt{Wk}    & \texttt{Lj}     & \texttt{Pt}     & \texttt{Tw}\\ \hline
        Ligra & 7.9 & 17.2 & 39.7 & 70.3 & 148.0 \\\hline
        Gemini & 16.1 & 19.9 & 35.0 & 103.2 & 130.8 \\\hline
        GraphIt & 17.1 & 20.0 & 37.3 & 103.9 & 149.8 \\\hline
        \bufGraph{} & 12.7 & 23.8 & 39.1 & 98.6 & 152.1 \\
        \Xhline{2\arrayrulewidth}
    \end{tabular}
\end{table}

As an example that details the actual execution time, Table~\ref{tab:time}A shows the execution times of the four systems on solving NCP. We can observe that it takes Ligra ($t=10$) 11.5 hours to process the PPRs on the \texttt{Tw} graph, while \bufGraph{} only needs 12 minutes. Thus, \bufGraph{} is more practical for many graph applications. \SR{Table~\ref{tab:time}B shows the memory usages of the four systems. Basically, most of the memory is spent on storing the execution results of FPP queries, and ForkGraph consumes $5-19\%$ more memory than other GPSs, mainly caused by buffers.}

\subsection{Cache Efficiency and Work Efficiency} \label{subsec:eval:cache}
\noindent\fbox{\parbox{\linewidth}{\textbf{Finding (2):} \bufGraph{} shows up to a factor of $100\times$ reduction of the number of LLC misses. First, the buffered execution is cache-efficient and it reduces the LLC misses of \bufGraph{} even with the same amount of work \CR{as other GPSs}. Second, the work efficient design of FPP queries processing further reduces the amount of total LLC accesses.}}

Figure~\ref{fig:llc-work} shows the profiling of cache performance and the number of edges processed of different GPSs. We present the amount of work as the number of edges processed during processing. For brevity, we show the profiling of LL and NCP applications on two representative graphs for each. We observe similar findings on other graphs. With the buffered execution model and work-efficient optimizations, \bufGraph{} significantly reduces the total LLC misses and work compared to others. Figure~\ref{fig:exp-llc} shows that \bufGraph{} reduces the number of LLC misses by up to a factor of $100\times$, compared to other GPSs' execution with $t=1$, and by up to a factor of $60\times$ over others' execution with $t=10$.

\SR{We also count the number of edges processed in executing sequential algorithms on the same data sets. We find that \bufGraph{} only processes $10.4-16.7\times$ more edges than the sequential algorithm (Dijkstra's algorithm) on BC and LL and $5.2-9.4\times$ more than the sequential algorithm on NCP.}

The significant reduction of LLC misses mainly comes from two optimizations: 1) the LLC-sized partition helps to guarantee that operations are limited within LLC during intra-partition processing and 2) the total reduction of workloads as \bufGraph{} \CR{can be} theoretically proved to be work-efficient. As shown in Figure~\ref{fig:exp-work}, \bufGraph{} significantly reduces the number of edges processed on road networks due to work-efficient yielding and scheduling, which is also the reason for significant speedups over the other GPSs. Although \bufGraph{} shows a similar amount of work as Gemini and GraphIt on \texttt{Lj} and \texttt{Tw}, it only incurs fewer than $1/10$ of the LLC misses as others, delivering near an order of magnitude speedup over them.

\begin{figure}[t]
	\centering
	\begin{subfigure}{0.5\textwidth}
		\begin{tikzpicture}
    \begin{axis}[
        legend cell align={left},
yticklabel style = {font=\footnotesize},
xticklabel style = {font=\footnotesize},
ylabel style={font=\footnotesize},
xlabel style={font=\footnotesize},
        enlarge x limits=0.1,
        width=84mm,
        height=36mm,
        ymode=log,
        ylabel=\#LLC misses,
        ybar=0pt,
        bar width=3pt,
        xtick = {1, 2, 3, 4},
        xticklabels = {LL on \texttt{Ca}, LL on \texttt{Us}, NCP on \texttt{Lj}, NCP on \texttt{Tw}},
        extra y ticks = 1,
        extra y tick labels={},
        extra y tick style={grid=major,major grid style={dashed, draw=red}},
        legend columns=4,
        legend style={
            at={(0.5, 1.02)},
            anchor=south,
            draw=none,
            font=\scriptsize,
        },
        legend image code/.code={\draw [#1] (0cm,-0.1cm) rectangle (0.1cm,0.1cm);},
    ]
    \addplot[fill=\ClgC] table [x=index, y=Ligra-S, col sep=space] {plot/LLC.csv};
    \addlegendentry{Ligra ($t=10$)}
    \addplot[fill=\ClgC!40] table [x=index, y=Ligra-C, col sep=space] {plot/LLC.csv};
    \addlegendentry{Ligra ($t=1$)}
    \addplot[fill=\CgmC] table [x=index, y=Gemini-S, col sep=space] {plot/LLC.csv};
    \addlegendentry{Gemini ($t=10$)}
    \addplot[fill=\CgmC!40] table [x=index, y=Gemini-C, col sep=space] {plot/LLC.csv};
    \addlegendentry{Gemini ($t=1$)}
    \addplot[fill=\CgtC] table [x=index, y=GraphIt-S, col sep=space] {plot/LLC.csv};
    \addlegendentry{GraphIt ($t=10$)}
    \addplot[fill=\CgtC!40] table [x=index, y=GraphIt-C, col sep=space] {plot/LLC.csv};
    \addlegendentry{GraphIt ($t=1$)}
    \addplot[fill=black] table [x=index, y=bufGraph, col sep=space] {plot/LLC.csv};
    \addlegendentry{\bufGraph{}}
    \addplot[fill=orange] table [x=index, y=Sequential, col sep=space] {plot/LLC.csv};
    \addlegendentry{Sequential}
    \end{axis}
\end{tikzpicture}
		\captionsetup{aboveskip=0pt}
		\caption{Number of LLC misses.}
		\label{fig:exp-llc}
	\end{subfigure}
	\begin{subfigure}{0.5\textwidth}
		\begin{tikzpicture}
    \begin{axis}[
        legend cell align={left},
yticklabel style = {font=\footnotesize},
xticklabel style = {font=\footnotesize},
ylabel style={font=\footnotesize},
xlabel style={font=\footnotesize},
        enlarge x limits=0.1,
        width=84mm,
        height=36mm,
        ymode=log,
        ylabel=\#edges processed,
        ybar=0pt,
        bar width=3pt,
        xtick = {1, 2, 3, 4},
        xticklabels = {LL on \texttt{Ca}, LL on \texttt{Us}, NCP on \texttt{Lj}, NCP on \texttt{Tw}},
        extra y ticks = 1,
        extra y tick labels={},
        extra y tick style={grid=major,major grid style={dashed, draw=red}},
        legend columns=4,
        legend style={
            at={(0.5, 1.52)},
            anchor=north,
            draw=none,
            font=\scriptsize,
        },
        legend image code/.code={\draw [#1] (0cm,-0.1cm) rectangle (0.15cm,0.2cm);},
    ]
    \addplot[fill=\ClgC] table [x=index, y=Ligra-S, col sep=space] {plot/Work.csv};
    \addplot[fill=\ClgC!40] table [x=index, y=Ligra-C, col sep=space] {plot/Work.csv};
    \addplot[fill=\CgmC] table [x=index, y=Gemini-S, col sep=space] {plot/Work.csv};
    \addplot[fill=\CgmC!40] table [x=index, y=Gemini-C, col sep=space] {plot/Work.csv};
    \addplot[fill=\CgtC] table [x=index, y=GraphIt-S, col sep=space] {plot/Work.csv};
    \addplot[fill=\CgtC!40] table [x=index, y=GraphIt-C, col sep=space] {plot/Work.csv};
    \addplot[fill=black] table [x=index, y=bufGraph, col sep=space] {plot/Work.csv};
    \addplot[fill=orange] table [x=index, y=Sequential, col sep=space] {plot/Work.csv};
    \end{axis}
\end{tikzpicture}
		\captionsetup{aboveskip=0pt}
		\caption{Number of edges processed.}
		\label{fig:exp-work}
	\end{subfigure}
	\captionsetup{aboveskip=0pt}
	\caption{Profiling results of the number of LLC misses and the number of edges processed per FPP query on four GPSs, solving two applications on four graphs.}
	\label{fig:llc-work}
\end{figure}

\subsection{Effects of Individual Techniques}\label{subsec:eval:individual}
\begin{figure}[t]
	\centering
	\begin{tikzpicture}
    \begin{axis}[
        y tick label style={
            /pgf/number format/fixed,
            /pgf/number format/fixed zerofill,
            /pgf/number format/precision=0,
        },
yticklabel style = {font=\footnotesize},
xticklabel style = {font=\footnotesize},
ylabel style={font=\footnotesize},
xlabel style={font=\footnotesize},
        ytick distance = 50,
        enlarge x limits=0.15,
        width=85mm,
        height=36mm,
        ymin=0,
        ylabel=Speedup,
        ybar=0pt,
        bar width=6pt,
        xtick = {1, 2, 3, 4},
        xticklabels = {LL on \texttt{Ca}, LL on \texttt{Us}, NCP on \texttt{Lj}, NCP on \texttt{Tw}},
        legend columns=-1,
        legend style={
            at={(0.5, 1.27)},
            anchor=north,
            draw=none,
            font=\footnotesize,
        },
        legend image code/.code={\draw [#1] (0cm,-0.1cm) rectangle (0.1cm,0.1cm);},
                nodes near coords style={
            font=\tiny,
            /pgf/number format/fixed,
            /pgf/number format/fixed zerofill,
            /pgf/number format/precision=2,
            anchor=south,
        },
    ]
    \addplot[fill=\CgmC] table [x=index, y=buffer, col sep=space] {plot/acc.csv};
    \addlegendentry{+buffer}
    \addplot[fill=\CgtC] table [x=index, y=gathering, col sep=space] {plot/acc.csv};
    \addlegendentry{+consolidation}
    \addplot[fill=yellow] table [x=index, y=ordering, col sep=space] {plot/acc.csv};
    \addlegendentry{+priority scheduling}
    \addplot[fill=black] table [x=index, y=yield, col sep=space] {plot/acc.csv};
    \addlegendentry{+yielding}
    \node[fill=none, xshift=-12pt, rotate=0, font=\bfseries\tiny] at (axis cs:3, 10) {0.48};
    \node[fill=none, xshift=-12pt, rotate=0, font=\bfseries\tiny] at (axis cs:4, 10) {0.14};
    \end{axis}
\end{tikzpicture}
	\captionsetup{aboveskip=0pt}
	\caption{Speedups achieved by applying different optimizations cumulatively to the Ligra baseline.}
	\label{fig:acc}
\end{figure}

\noindent\fbox{\parbox{\linewidth}{\textbf{Finding (3):} The proposed techniques accumulatively improve the performance of \bufGraph{}.}}

We evaluate the performance impacts of major design rationales in \bufGraph{}, including buffered execution, query-centric operation consolidation, priority-based scheduling, and yielding. We cumulatively enable each of those optimizations one by one on the baseline Ligra ($t=10$) to study the advantages of individual design decisions. For brevity, we show LL and NCP on four graphs again.

Figure~\ref{fig:acc} shows the performance improvement. We have the following observations. First, we enable the \bufModel{} and sequential execution of FPP queries, denoted as +buffer. +buffer achieves $25\times$ speedups on solving LL on road networks because it benefits from the good locality. Also, we observe that applying the \bufModel{} brings negative performance improvements in solving NCP. The reason is that we only finish the processing of a PPR query in the partition when it converges, which brings extra workload when dealing with an excessive number of revisits. Second, we enable the atomic-free processing by query-centric operation consolidation, denoted as +consolidation. It begins to show significant speedups by both reducing the overhead of atomic operations and the memory overhead caused by expanding superfluous operations to neighbor partitions. Next, we enable the optimizations of priority-based scheduling and yielding, respectively. These two optimizations holistically improve the performance over +consolidation by further $1.3-16.2\times$. The yielding optimization generally provides more significant speedups than others as it cuts off the work directly during the processing, while the priority-based scheduling reduces workloads indirectly.

\vspace{1mm}\noindent
\textbf{Impacts of parameter tuning in inter-partition scheduling.} We further evaluate the performance impacts of the priority-based scheduling and yielding. We only show the results for BC with 100 SSSPs on \texttt{Us} graph for brevity.

Table~\ref{tab:priority}A shows the impact of different priority functors. We make the following observations. First, it is inefficient to schedule the partition with most operations to process (``Max \#operations''), even though it is more cache efficient intuitively. As a result, it is even slower than the baseline by picking an arbitrary non-empty buffer to process in each step (``Random''). \SR{Second, the performance of the default FIFO scheduling is slightly better than a random scheduling.} Third, compared to other priority functors, adopting the ``Shortest'' priority functor from the corresponding sequential algorithms delivers several times speedups over others.


Table~\ref{tab:priority}B shows the performance of \bufGraph{} using different yielding thresholds based on the number of edges processed. We define $\mu$ to be the number of edges in the partition divided by the total number of queries\CR{, which is the theoretical threshold
\ifdefined\complete{}
 (see the proof in \appref{app:work}).
\else
 (see the proof in the appendix of the complete version~\cite{forkgraph2021}).
\fi}
We change the threshold value of the heuristics at the basis of $\mu$. We can observe that the \CR{execution of applying }threshold $\mu$ results in an execution time near the fastest but not necessarily the fastest. It is because that the threshold is obtained based on the theoretical upper bound of the number of revisits; however, the number of revisits is far below the bound in practice, which makes a larger threshold perform well. \SR{As there can be thousands of queries in our experiments for NCP applications on large graphs, we use a larger threshold, $100\mu$, for these cases.}

Table~\ref{tab:priority}C shows the impact of yielding heuristics based on the value updated. We adopt the $\Delta=50,000$ used in ~\cite{zhang2020optimizing} for the same data set \texttt{Us} and also test the execution instances with the threshold varied. We present the execution times of \bufGraph{} \CR{with the threshold setting varied from} $0.25\Delta$ to $4\Delta$ for brevity. We have the observations as follows. First, when the threshold is large, \bufGraph{} spends more time as there are more redundant operations abandoned in each partition. Second, when the threshold is small, \bufGraph{} aggressively yields the processing in a partition and results in a high number of revisits. We choose the thresholds directly adopted from the $\Delta$-stepping algorithm~\cite{meyer2003delta, zhang2020optimizing} in our experiments.

\begin{table}[t]
	\captionsetup{aboveskip=0pt}
	\caption{Performance of \bufGraph{} under different priority-based scheduling \CR{methods} and yielding parameters, solving BC on \texttt{Us}.}
	\label{tab:priority}

\scriptsize
\centering
\resizebox{0.48\textwidth}{!}{
    \begin{tabular}{|l|c|c|c|c|}
        \multicolumn{5}{c}{\textbf{A. Impacts of priority-based scheduling (yielding enabled).}}\\
        \Xhline{2\arrayrulewidth}
        Priority functor  & Random  & Max \#operations & \SR{FIFO} & \textbf{Shortest} \\ \hline
        Execution time (s) & 504.3 & 749.9 & \SR{491.3} & \textbf{168.8} \\
        \Xhline{2\arrayrulewidth}
    \end{tabular}
}
\vspace{1pt}

\normalsize
\centering
\resizebox{0.48\textwidth}{!}{
    \begin{tabular}{|l|c|c|c|c|c|c|}
        \multicolumn{7}{c}{\textbf{B. Impacts of yielding heuristic 1: on the number of edges processed}}\\
        \multicolumn{7}{c}{\textbf{(priority-based scheduling enabled).}} \\
        \Xhline{2\arrayrulewidth}
        Threshold   & $0.25\mu$ & $0.5\mu$ & $\mu$ & $\bm{2\mu}$ & $4\mu$ & \SR{No Yielding}\\  \hline
        Execution time (s)   & 450.9 & 412.4 & 325.6 & \textbf{238.6} & 248.3 & \SR{1945.8}\\
        \Xhline{2\arrayrulewidth}
    \end{tabular}
}
\vspace{1pt}

\normalsize
\centering
\resizebox{0.48\textwidth}{!}{
    \begin{tabular}{|l|c|c|c|c|c|c|}
        \multicolumn{7}{c}{\textbf{C. Impacts of yielding heuristic 2: on the operations' values updated}}\\
        \multicolumn{7}{c}{\textbf{(priority-based scheduling enabled).}}\\
        \Xhline{2\arrayrulewidth}
        Threshold   & $0.25\Delta$ & $0.5\Delta$ & $\bm{\Delta}$ & $2\Delta$ & $4\Delta$ & \SR{No Yielding}\\ \hline
        Execution time (s)   & 420.6      & 297.0     & \textbf{168.8}  & 172.1   & 239.8 & \SR{1945.8} \\
        \Xhline{2\arrayrulewidth}
    \end{tabular}
}

\end{table}

\subsection{Other Results} \label{subsec:eval:other}
Due to the space limit, we summarize some experimental results as follows. 
\ifdefined\complete{}
The reader is referred to the details in the appendix.
\else
The reader is referred to the details in the appendix of the complete version~\cite{forkgraph2021}.
\fi

\vspace{1mm}\noindent
\textbf{Memory stall distribution.} As more than $34\%$ of the execution time of other GPSs is spent on memory \CR{stalls}, the time spent in \bufGraph{} is only less than $20\%$ of the total execution time. \bufGraph{}'s design limits the operations to graph partitions in the LLC and thus reduces the percentage of the costly DRAM accesses, which is shown as the low memory stall distribution.

\vspace{1mm}\noindent
\textbf{Scalability.} We evaluate the scalability of ForkGraph in the numbers of threads and queries. \SR{First, \bufGraph{} can achieve $7-8\times$ speedups when scaling up from one to ten cores (with hyper-threading disabled) for most of the graphs. Second, \bufGraph{} shows the capability to remain at a high throughput with processing growing numbers of FPP queries.}

\vspace{1mm}\noindent
\CR{\textbf{Effects of Partition and Cache Size.} We empirically study the effects of graph partitioning methods, partition sizes, and cache sizes. First, \bufGraph{} on METIS partition shows up to $14.1\times$ and $4.2\times$ speedups over a random partition and Gemini's lightweight partition~\cite{zhu2016gemini}, respectively, when executing LL and BC on different graphs. \bufGraph{} on METIS partition shows $1.1-3.6\times$ speedups over other partitioning methods when executing NCP on different web and social networks. Second, our results show that using LLC-size partitions achieves the best performance for most cases. This is because the intra-partition processing can cause heavy cache thrashing if the partition size is larger than the LLC size. Further, if we divide the graph into small partitions, there can be a large number of partitions to schedule, which incurs a high overhead.}

\ifdefined\complete{}
\vspace{3mm}
\else
\fi
\section{Related Work}\label{sec:related}
\textbf{Graph processing on multi-core architectures.}
There has been substantial works on efficient parallel graph systems and frameworks over the past years, including~\cite{gonzalez2012powergraph, low2010graphlab, malewicz2010pregel, zhang2018graphit, zhu2016gemini} among many others. Ligra~\cite{shun2013ligra}, Galois~\cite{nguyen2013lightweight}, as the representative shared-memory graph processing frameworks. These frameworks are designed with abstractions for users to conduct graph computations while leveraging hardware properties such as memory locality and multi-cores efficiently. GraphIt~\cite{zhang2018graphit} is a DSL for graph processing, which generates parallel implementations of graph applications. GraphIt integrates a scheduling language to ease the exploration of the complicated tradeoff space. We refer the reader to \cite{mccune2015thinking, yan2017big} for excellent surveys of this growing literature.

Zhang et al.~\cite{zhang2018graphit} and Lakhotia et al.~\cite{lakhotia2019gpop} propose to improve the cache utilization by breaking the graph into segments that fit in the LLC. In this way, random accesses at each partition are limited in the cache, avoiding costly memory accesses. Similarly, \bufGraph{} also partitions graphs into LLC-size partitions. Unlike their approaches, \bufGraph{} designs a \bufModel{} to specifically optimize processing queries simultaneously to enable the work-efficient and cache-efficient FPP processing.

\vspace{1mm}\noindent
\textbf{Other related works in graph processing systems.}
\SR{Yan et al. propose Blogel~\cite{yan2014blogel}, a block-centric, distributed graph processing framework. Both \bufGraph{} and Bogel consider a partition of a graph as a computing block (instead of a vertex or an edge). The block-centric computing model \CR{proposed by Bogel} helps decrease the number of iterations compared to a vertex-centric algorithm and also \CR{reduces} the number of messages transmitted through the network in the distributed setting. \bufGraph{} is different from Blogel since Blogel is distributed, and \bufGraph{} is in-memory. Besides, Blogel executes one query at a time, while \bufGraph{} focuses on the efficiency of inter-query parallelism among multiple FPP queries.}

\SR{In addition, Zhao et al. propose GraphM~\cite{zhao2019graphm}, a storage system that efficiently handles consolidated, out-of-core concurrent graph queries. GraphM divides the graph into partitions and sets the highest priority to the partition with the most jobs. However, as shown in Table~\ref{tab:priority}, this approach is inefficient for FPP queries because GraphM's scheduling is designed for the out-of-core, BSP model, but not work-efficient for in-memory execution.}

\vspace{1mm}\noindent
\textbf{Processing multiple graph queries simultaneously.}
Zhang et al. \cite{zhang2018cgraph} propose \CR{CGraph, one of the the state-of-the-art disk-based graph processing systems that handle multiple queries simultaneously.} \CR{CGraph} efficiently amortizes the high disk access cost when processing \CR{multiple} queries simultaneously by a subgraph-based scheduling algorithm. Differently, \bufGraph{} targets cache-efficient FPP queries processing in memory. While CGraph processes all queries \CR{in every iterations}, \bufGraph{} \CR{only processes a subset of} FPP queries in the partition-level granularity, leveraging work-efficient sequential executions. Hauck et al.~\cite{hauck2017can} motivate the experimental studies of processing concurrent queries based on the multi-user setups in classic relational enterprise database environments or web-scale environments. They study the inter- and intra- parallelism of handling different types of graph queries by assigning different threads to different instances of Galios~\cite{nguyen2013lightweight}. However, executing multiple instances contains inevitable contentions of the system resources, including memory and threads. The authors provide an in-depth discussion of the limitation of GPSs in handling multiple queries but do not come out with a suitable solution.

MS-BFS \cite{then2014more} and iBFS \cite{liu2016ibfs} are proposed to accelerate multiple BFS queries using multi-core CPUs and GPUs (Graphics Processing Units), respectively. Instead of visiting vertices individually for each BFS, both work leverage joint frontier queue and bitwise operations for multiple BFS queries. However, the techniques specifically serve only BFS queries, losing the generalities.



\vspace{1mm}\noindent
\textbf{Buffering accesses to index structures.}
The access buffering model proposed in this work is inspired by Zhou and Ross's work~\cite{zhou2003buffering, zhou2004buffering} on buffering accesses to tree-structured indexes, e.g., B$^+$-tree \cite{comer1979ubiquitous}, and many other related buffering techniques~\cite{skopal2007construction,shahvarani2016hybrid, graefe2011modern, 10.1145/1366102.1366105}. The buffering techniques are mainly used for avoiding cache thrashing between query accesses by processing buffered lookups at index nodes. Besides, He et al.~\cite{10.1145/1366102.1366105} \CR{develop} a cache-oblivious design on buffering accesses. Those previous works inspire our buffer execution model. However, the previous studies work on a relatively simpler problem on tree accesses, which always go from the root to leaf nodes. In FPPs, the access pattern is more irregular. A query could start randomly from any vertex and expand to different neighbors, and the access can be repeated, unlike the accesses\CR{ on trees}. Therefore, this work develops novel intra- and inter-partition mechanisms to improve the work and cache efficiency.

\vspace{1mm}\noindent
\SR{\textbf{Other related topics from relational databases.}
The cache-aware techniques in \bufGraph{} have been greatly inspired by the substantial studies from relational databases. Particularly, Harizopoulos et al. propose Qpipe~\cite{harizopoulos2005qpipe}, an relational query engine that exploits overlap across concurrent queries at runtime. Qpipe buffers data pages brought by queries and reuses them for other submitted queries. Moreover, systems like Dora~\cite{pandis2010data}, H-Store~\cite{kallman2008h}, and many other works partition the data logically or physically to enable concurrent transactions execution in parallel on partitioned data, which share the similar spirit of \bufGraph{} on LLC-sized graph partitions. The logging solution proposed by Johnson et al.~\cite{johnson2010aether} aggregates requests from threads to reduce the contention among them by making the requests independent from the number of threads. This is similar to the consolidation process of \bufGraph{}.}

\SR{However, the techniques proposed in previous studies cannot be directly applied in this work since these studies are in relational databases and this work focuses on graph processing. Particularly, this work has the following differences. First, the order of operations in Qpipe and other works in relational databases does not affect the work efficiency, while the order of graph query operations affect the work efficiency. Second, compared with the consolidation techniques proposed by Johnson et al.~\cite{johnson2010aether}, \bufGraph{} not only reduces the contention but also leverages the algorithmic properties such as priority and yielding in graph algorithms to reduce redundant computation.}

\ifdefined\complete{}
\vspace{3mm}
\else
\fi
\section{Conclusions}\label{sec:conclusion}
As graph applications emerge, we observe a common and costly \forkP{} (FPP) that launches many independent queries from different source vertices on the same graph. Our profiling studies demonstrate that existing parallel graph systems suffer from severe cache thrashing due to irregular graph structures and many parallel queries in FPP. Thus, we propose \bufGraph{}, a cache-efficient graph processing system for processing FPPs on in-memory graph data. Specifically, \bufGraph{} embraces a cache-efficient buffer execution model to handle operations of many FPP queries. Moreover, we develop effective intra- and inter-partition mechanisms to improve work efficiency.  Our evaluations on real-world graphs show that \bufGraph{} significantly outperforms state-of-the-art graph processing systems (including Ligra, Gemini, and GraphIt) by two orders of magnitude speedups.


\begin{acks}
This project is supported by the grant ``Asian Institute of Digital Finance'' awarded by National Research Foundation, Singapore and administered by the Infocomm Media Development Authority under its Smart Systems Strategic Research Programme in 2020. Any opinions, findings and conclusions or recommendations expressed in this material are those of the author(s) and do not reflect the views of National Research Foundation, Singapore.
\end{acks}

\bibliographystyle{ACM-Reference-Format}
\bibliography{reference}
\balance

\ifdefined\complete{}
\begin{filecontents*}[overwrite]{plot/app_part.csv}
dataset		index	1		2		3	4		5
LL_on_Ca	1		1.74	1.08	1	0.97	1.05
LL_on_Us	2		4.13	3.08	1	1.36	1.73
NCP_on_LJ	3		2.75	1.28	1	1.03	1.2
NCP_on_Tw	4		1.82	1.29	1	0.99	1.12
\end{filecontents*}

\clearpage
\appendix

\section{Work Efficiency Analysis}\label{app:work}
\SR{The efficiency of a parallel program is determined by the total number of operations, or work that it performs. In this section, we analyze the work needed by \bufGraph{} in processing $|Q|$ FPP queries to show that it is work-efficient because it performs the amount of work, to within a constant factor, as the fastest known sequential algorithm~\cite{blelloch1996parallel}. We outline the analysis as follows.}

\SR{First, in Lemma~\ref{l1}, we show that \bufGraph{} is work-efficient when there is only one FPP query running. Second, in Lemma~\ref{l2}, we show that \bufGraph{} is work-efficient in the processing of an arbitrary query $q \in Q$, when there are $|Q|$ FPP queries running simultaneously. Therefore, in Theorem~\ref{l3}, we can establish that FPP queries processing is work-efficient in \bufGraph{}.}

We show the proof sketch of handling SSSPs as an example, and other graph algorithms like BFS and PPR can be proved similarly. Particularly, we prove that \bufGraph{} performs the same amount of work, to within a constant factor, as the fastest known sequential algorithm for SSSP, which is Dijkstra's algorithm using a Fibonacci heap priority queue, i.e., $\bigO(|E| + |V|log|V|)$~\cite{fredman1987fibonacci}.




\begin{lemma}\label{l1}
    \bufGraph{} executes a single SSSP with the same amount of work, to within a constant factor as the Dijkstra's algorithm $\bigO(|E| + |V|log|V|)$.
\end{lemma}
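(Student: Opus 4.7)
The plan is to analyze the work of \bufGraph{} on a single SSSP query by viewing its execution as a partition-level simulation of Dijkstra's algorithm and then accounting for the overhead introduced at partition boundaries. The proof will proceed in three phases. First, I would establish a correctness-style invariant: because the priority-based scheduler always selects the partition whose buffer contains the globally smallest-distance operation, and because the intra-partition processing uses the ``Shortest'' priority functor, the operation processed at any step carries the minimum tentative distance across the whole graph, exactly as in Dijkstra's extract-min. An inductive argument, analogous to the standard proof of Dijkstra's correctness, then shows that each vertex is ``finalized'' (processed with its true shortest-path distance) exactly once.

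Second, I would bound the intra-partition work per visit. Within a visit, \bufGraph{} runs the sequential PBBS Dijkstra implementation on the buffered operations, so a visit that expands $k$ operations and relaxes $e$ edges costs $O(k\log k + e)$ when the priority structure is a Fibonacci heap. Since \secref{sec:cache} guarantees atomic-free execution via query-centric consolidation, there is no additional synchronization overhead to account for in the single-query case.

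Third, I would bound the cumulative relaxation work across all partition visits. Since vertex finalization happens at most once, the productive relaxations sum to at most $|E|$. The yielding heuristic on the number of edges processed (\secref{sec:yield}, with threshold $\mu = |E_{P_i}|$ in the single-query case) prevents any visit from doing work disproportionate to the operations currently in its buffer, so the total redundant relaxations over all revisits telescope to another $O(|E|)$. Combining this with the $O(|V|\log|V|)$ cost of priority-queue maintenance (shared between the inter-partition scheduler and the per-partition sequential Dijkstra) yields $O(|E|+|V|\log|V|)$, matching Dijkstra's bound up to a constant factor.

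The main obstacle is the third step: formally bounding the number of revisits to a partition. A naive analysis allows a partition to be re-entered every time an incoming cross-partition operation improves some interior distance, which could superlinearly exceed $|E_{P_i}|$. The crux will be an amortization argument that charges each edge relaxation uniquely either to a finalization event (at most $|V|$ in total) or to an operation whose distance is never later improved (at most $|E|$ in total, since each edge generates at most one such non-improving relaxation under the priority ordering). Making this charging rigorous---so that the interplay between priority scheduling and the yielding threshold produces a clean potential function---is where I expect the argument to require the most care.
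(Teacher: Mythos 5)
There is a genuine gap, and it sits exactly where you flagged it. Your Phase 1 invariant --- that every processed operation carries the globally minimum tentative distance, so each vertex is finalized exactly once --- is too strong for \bufGraph{}'s buffered execution. The priority-based scheduler guarantees this only for the \emph{first} operation extracted after a partition is scheduled; once inside a partition, the system drains that partition's buffer sequentially without re-consulting the other partitions' buffers, so later operations in the same visit may carry distances larger than tentative distances sitting in other buffers. Vertices can therefore be settled with non-final values and must be revisited when cross-partition operations arrive (the paper's own yielding example, where vertex $\mathsf{E}$ in $P_1$ is updated again via $P_2$, is precisely this phenomenon). With the one-finalization-per-operation invariant gone, your Phase 3 claim that productive relaxations total $\bigO(|E|)$ and redundant ones ``telescope to another $\bigO(|E|)$'' is exactly the hard statement, and your proposal only promises the charging argument rather than delivering it; as written, nothing rules out a partition being re-entered many times with each re-entry redoing work proportional to its local edge set.

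The paper avoids this difficulty by a much coarser accounting: it argues that each scheduling step finalizes at least one vertex (true for the first operation of each visit), so there are $\bigO(|V|)$ partition visits costing $\bigO(\log|V|)$ each in scheduler maintenance, and then it bounds the \emph{entire} intra-partition work per visit by a constant $C$ on the grounds that $|V_{P_i}|$ and $|E_{P_i}|$ are capped by the LLC size, yielding $\bigO(C\cdot|V|\log|V|)$ and declaring this within a constant factor of Dijkstra. Your route aims at a genuinely stronger, hardware-independent amortized bound of $\bigO(|E|+|V|\log|V|)$, which would be a cleaner result if completed --- but completing it requires either restoring a global extract-min discipline (rescheduling after every operation, which is not what the system does) or a rigorous potential-function argument bounding revisit-induced redundancy, neither of which your proposal supplies. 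Either finish that amortization or fall back to the paper's per-visit constant bound; as it stands the proof is incomplete.
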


\vspace{-3mm}
\begin{proof}[Proof of Lemma \ref{l1}]

\SR{In each processing step, \bufGraph{} selects a partition using the inter-partition scheduling and then conducts the intra-partition processing within the partition. Therefore, we compute the work as the product of the following two parts.}

\vspace{1mm}
\noindent
\SR{\emph{Part 1 - inter-partition scheduling:}} When there is only one query processed, \bufGraph{} finds the shortest path to at least one vertex in each step because the priority-based scheduling always selects the partition with the shortest path to process next. Therefore, to find all the shortest paths from the source vertex to other $|V-1|$ vertices, the total number of scheduling is $\bigO(|V|)$. As the cost of maintaining a priority queue is $\bigO(log|V|)$, the number of inter-partition scheduling is $\bigO(|V|log|V|)$.

\vspace{1mm}
\noindent
\SR{\emph{Part 2 - intra-partition processing:}} Given an arbitrary partition $P_i$, the work of processing the SSSP query in $P_i$ is the same as its time complexity, i.e., $\bigO(|E_{P_i}| + |V_{P_i}|log|V_{P_i}|)$, since we adopt the sequential implementation (without yielding optimization). Based on the fact that $P_i$ is an LLC-sized partition, the number of vertices $|V_{P_i}|$ and the number of edges $|E_{P_i}|$ are smaller than the cache size, considered as constant values. Therefore, we can use a constant value $C$ to denote the work of intra-partition processing. Thus, the total work is computed as $\bigO(C\cdot|V|log|V|)$, which is within a constant factor as the Dijkstra's algorithm.
\end{proof}

\begin{lemma}\label{l2}
    When \bufGraph{} executes $|Q|$ SSSP queries, the amount of work it performs on an arbitrary query $q$, is the same, to within a constant factor, as the Dijkstra's algorithm.
\end{lemma}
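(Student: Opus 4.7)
The plan is to extend the single-query analysis of Lemma~\ref{l1} to the multi-query setting. The key complication absent in Lemma~\ref{l1} is that the priority-based scheduler now picks the partition with the globally highest priority across \emph{all} $|Q|$ queries, so when processing an arbitrary fixed query $q$, the scheduled partition $P_c$ may be optimal only for some other query $q' \neq q$. For $q$, operations processed inside $P_c$ during that visit may later be invalidated when shorter paths arrive from neighbor partitions, forcing $q$ to revisit $P_c$. I would therefore decompose the work charged to $q$ into (i) the per-visit intra-partition cost and (ii) the total number of revisits, and bound each separately, mirroring the two-part structure already used in Lemma~\ref{l1}.

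For (i), I would apply yielding heuristic~1 with the threshold $\mu = |E_{P_c}|/|Q|$ (the theoretical value highlighted in Section~\ref{subsec:eval:individual} and to be justified here). This caps the number of edges $q$ may process in a single visit to $P_c$ at $\mu$, so the intra-partition cost chargeable to $q$ per visit is $\bigO(\mu + \log |V_{P_c}|)$, where the $\log$ term comes from the sequential Dijkstra priority queue maintained on $P_c$'s local frontier. Since $|V_{P_c}|$ and $|E_{P_c}|$ are bounded by the LLC size (a constant), this is $\bigO(\mu)$. For (ii), I would argue that each revisit of $P_c$ by $q$ must be triggered by an inbound operation carrying a strictly improving tentative distance (otherwise the query-centric consolidation immediately prunes it using the priority functor); hence revisits correspond injectively to distance improvements at boundary vertices of $P_c$, of which there are $\bigO(|V_{P_c}|) = \bigO(1)$. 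Multiplying, the total intra-partition work on $P_c$ attributable to $q$ is $\bigO(\mu \cdot |V_{P_c}|) = \bigO(|E_{P_c}|/|Q|)$ after constants are absorbed, and summing over all partitions and multiplying by $|Q|$ recovers the $\bigO(|E|)$ term of Dijkstra's bound per query, with the $\bigO(|V|\log|V|)$ scheduling term inherited from Lemma~\ref{l1}.

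The main obstacle is rigorously bounding the number of revisits; a loose argument could easily admit $\Omega(|Q|)$ revisits per partition per query, which would destroy work efficiency. To handle this, I would combine two structural facts: first, the priority-based scheduler guarantees that any partition visit finalizes the shortest path to at least one vertex for the query owning the top-priority operation (the Dijkstra invariant on cache-resident frontiers), so globally the scheduler performs at most $\bigO(|V|)$ ``productive'' visits across all queries; second, for a given query $q$, the yielding threshold $\mu$ ensures that the ``passive'' work done during visits scheduled for other queries is amortized against edges that sequential Dijkstra on $q$ would have relaxed anyway, since the priority functor processes $q$'s buffered operations in non-decreasing distance order, matching Dijkstra's settled-vertex ordering up to the yielding budget. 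Together these two facts reduce the amortized revisit count per edge to $\bigO(1)$, yielding a total of $\bigO(|E| + |V|\log|V|)$ work for $q$, which is within a constant factor of Dijkstra's algorithm and establishes the lemma.
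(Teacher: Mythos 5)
Your proposal follows the same skeleton as the paper's proof of Lemma~\ref{l2}: split the cost charged to $q$ into inter-partition scheduling and intra-partition work, and use yielding heuristic~1 with threshold $\mu=|E_{P_i}|/|Q|$ so that the work $q$ performs in any single visit is at most $C/|Q|$ for a constant $C$ determined by the LLC-sized partition. Where you diverge is the visit/revisit accounting, and that step has a genuine gap. The claim that revisits of $P_c$ by $q$ map injectively to distance improvements at boundary vertices, ``of which there are $\bigO(|V_{P_c}|)$,'' does not hold: a boundary vertex can receive an improving update once per inbound cross-partition edge, and because intra-partition processing is cut short by yielding, the sources of those edges can themselves improve later and re-relax the same edges, so the number of improving inbound operations is not bounded by the number of boundary vertices. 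Your arithmetic is also internally inconsistent: with $\bigO(1)$ revisits per partition and $\bigO(|E_{P_c}|/|Q|)$ work per visit, summing over partitions gives $\bigO(|E|/|Q|)$, yet you then multiply by $|Q|$ to ``recover the $\bigO(|E|)$ term'' --- that factor of $|Q|$ is precisely what your revisit bound was meant to exclude, so either the bound or the multiplication must be abandoned. Finally, ``at most $\bigO(|V|)$ productive visits across all queries'' overshoots what the settling argument gives: each scheduled visit finalizes the shortest path of at least one vertex for at least one query, which yields $\bigO(|Q|\cdot|V|)$ visits in total (i.e., $\bigO(|V|)$ per query), and that is the bound the paper actually uses.

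The paper's proof avoids per-query revisit counting altogether: it bounds the total number of partition visits across all queries by $\bigO(|Q|\cdot|V|)$, attributes an $\bigO(\log|V|)$ priority-queue cost to each scheduling step, and then charges $q$ at most $C/|Q|$ intra-partition work in \emph{every} visit, whether or not that visit was scheduled on $q$'s behalf, thanks to the yielding budget $|E_{P_i}|/|Q|$. Multiplying gives $\bigO((C/|Q|)\cdot|Q|\cdot|V|\log|V|)=\bigO(C\cdot|V|\log|V|)$ for $q$, within a constant factor of Dijkstra's algorithm. If you drop the injectivity claim and adopt this coarser charging scheme, your argument becomes consistent and coincides with the paper's; if you want the finer per-partition revisit bound you sketched, you would need a substantially stronger argument than the one proposed.
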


\vspace{-3mm}
\begin{proof}[Proof of Lemma \ref{l2}]
\SR{Similar as the proof of Lemma~\ref{l1}, we compute the work as the product of the complexity of inter-partition scheduling and intra-partition processing.}

\SR{\emph{Part 1 - inter-partition scheduling:}} As there are total $|Q|$ queries, the upper bound of the number of partition visits is $\bigO(|Q|\cdot|V|)$, each of which requires scheduling of a partition. Among all the scheduling, up to $\bigO(|V|)$ times of visits are scheduled based on query $q$'s preference, and the rest is based on others. Since \bufGraph{} sets a partition's priority as the best priority among all the queries buffered in the partition, there are at most $|V|$ items in the priority queue, with a maintenance cost of at $\bigO(log|V|)$. Therefore, the number of inter-partition scheduling is $\bigO(|Q|\cdot|V|log|V|)$.

\SR{\emph{Part 2 - intra-partition processing:}} \SR{Although the number of inter-partition scheduling increases by $|Q|$ when there are $|Q|$ queries executing, we can leverage the yielding optimization to directly cut off a query's work in intra-partition processing to amortize the redundancy. We can achieve this by either setting the upper bound of the number of edges to process in $P_i$ as $|E_{P_i}|/|Q|$ or leveraging the $\Delta$-stepping algorithm~\cite{meyer2003delta} to yield the query by tuning the $\Delta$ parameter. Thus, we reduce the work of intra-partition processing by a factor of $|Q|$, which is $C/|Q|$.}

Combining the two parts, \bufGraph{} performs $\bigO((C/|Q|)\cdot|Q|\cdot|V|log|V|) = \bigO(C\cdot|V|log|V|)$ amount of work in processing $q$, which is within a constant factor as the Dijkstra's algorithm. Therefore, the processing of an arbitrary SSSP in \bufGraph{} is work efficient.
\end{proof}

Since \bufGraph{} is efficient in performing on any SSSP, it is easy to show that the processing of $|Q|$ SSSPs in \bufGraph{} performs the amount of work, to within a constant factor, as $\bigO(|Q|\cdot(|E| + |V|log|V|))$. Therefore,
\begin{theorem}\label{l3}
    As the fastest known sequential algorithm, Dijkstra's algorithm, performs $\bigO(|Q|\cdot(|E| + |V|log|V|))$ work on $|Q|$ SSSPs, \bufGraph{} performs the amount of work within a constant factor $C$ of it, which is efficient.\qed
\end{theorem}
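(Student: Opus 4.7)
The plan is to derive Theorem~\ref{l3} as a direct corollary of Lemma~\ref{l2} by summing the per-query work bound over the entire query set $Q$. The first step is to invoke Lemma~\ref{l2}, which already establishes that for an arbitrary $q \in Q$ the work \bufGraph{} performs on $q$ is $\bigO(C\cdot|V|\log|V|)$; since $C$ is a constant determined by the LLC-sized partitions and $|V|\log|V| = \bigO(|E|+|V|\log|V|)$, this per-query bound lies within a constant factor of Dijkstra's single-source complexity. Because Lemma~\ref{l2} is stated for an arbitrary query in the batch, the same bound applies uniformly to every $q \in Q$.

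The second step is to aggregate the per-query bounds. Let $W_{\mathrm{FG}}(q)$ denote the work \bufGraph{} spends on query $q$. Summing over $Q$ gives
\begin{equation*}
\sum_{q \in Q} W_{\mathrm{FG}}(q) \;=\; \bigO\bigl(|Q|\cdot C\cdot|V|\log|V|\bigr) \;=\; \bigO\bigl(|Q|\cdot(|E|+|V|\log|V|)\bigr),
\end{equation*}
which matches $|Q|$ independent runs of Dijkstra's algorithm up to a constant factor, establishing the theorem.

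The main obstacle is justifying that the \emph{total} work actually executed by \bufGraph{} is captured (up to constants) by $\sum_{q\in Q} W_{\mathrm{FG}}(q)$: one must show that no scheduling or bookkeeping cost is either double-counted or left unattributed when operations from different queries interleave within a shared buffer. The charging argument needed here is already implicit in the proof of Lemma~\ref{l2}: the $\bigO(|Q|\cdot|V|\log|V|)$ aggregate scheduling cost has been uniformly amortized so that each query carries a $\bigO(|V|\log|V|)$ share, and the yielding optimization caps each query's intra-partition work inside $P_i$ at $|E_{P_i}|/|Q|$, so summing first over queries within a partition and then over partitions recovers exactly the stated bound without any extra multiplicative factor in $|Q|$.

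Finally, extending the statement to the other FPP query types mentioned in the paper (BFS and PPR) would follow by the same two-step recipe: replace the Dijkstra baseline with the corresponding fastest sequential algorithm, verify that its per-source complexity is preserved by Lemma~\ref{l1} under the LLC-sized partitioning, and repeat the summation. Since the partitioning, consolidation, and yielding machinery are oblivious to the particular query semantics, no new accounting is needed beyond substituting the appropriate priority functor and sequential baseline.
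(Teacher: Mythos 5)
Your proposal is correct and follows essentially the same route as the paper: the paper also obtains Theorem~\ref{l3} directly from Lemma~\ref{l2} by observing that the per-query bound $\bigO(C\cdot|V|\log|V|)$ holds uniformly for every $q \in Q$ and summing over the $|Q|$ queries, with the scheduling and yielding costs already amortized per query inside the proof of Lemma~\ref{l2}. Your added remarks on avoiding double-counting and on extending to BFS/PPR match the paper's own (briefer) treatment rather than constituting a different argument.
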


\SR{Theoretically, the constant factor $C$ is bounded by the LLC size. We experimentally study that \bufGraph{} only processes $5.2-16.7\times$ more edges than the sequential implementations, while other GPSs could process more than $129\times$ edges.}

\section{More Implementation Details}

\subsection{Operation Consolidation in Buckets}
We include more details of the multi-bucket buffer in this section. Particularly, we allocate $K$ independent buckets for each partition's buffer and we equally divide the queries into $K$ disjoint sets and assign each set of queries to explicitly use one of the buckets in each buffer. We use the illustration in Figure~\ref{fig:app-con} as an example to compare the operation consolidation using either a single buffer or multiple buckets. We let $K=2$ and thus divide the buffer of $P_1$ into two buckets and we can allocate threads to handle different buckets, i.e., thread 1 is assigned to handle $q_1$ and $q_3$ in bucket 1, and thread 2 is assigned to handle $q_2$ and $q_4$ in bucket 2.

\begin{figure}[t]
	\centering
	\begin{subfigure}[t]{0.44\textwidth}
		\centering
		\includegraphics[width=0.6\textwidth]{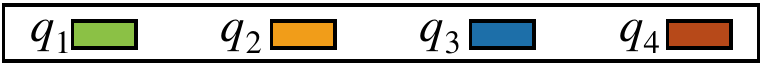}
	\end{subfigure}
	\begin{subfigure}[b]{0.22\textwidth}
		\centering
		\includegraphics[width=0.93\textwidth]{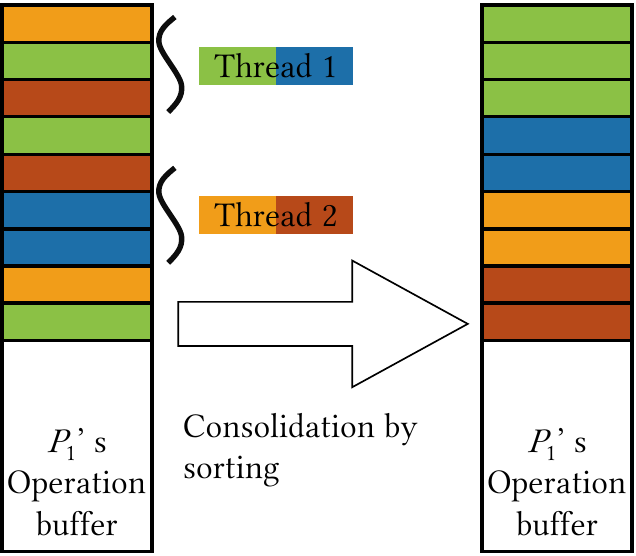}
		\caption{Consolidation in buffer.}
		\label{fig:con-buffer}
	\end{subfigure}\hfill%
	\begin{subfigure}[b]{0.22\textwidth}
		\centering
		\includegraphics[width=0.93\textwidth]{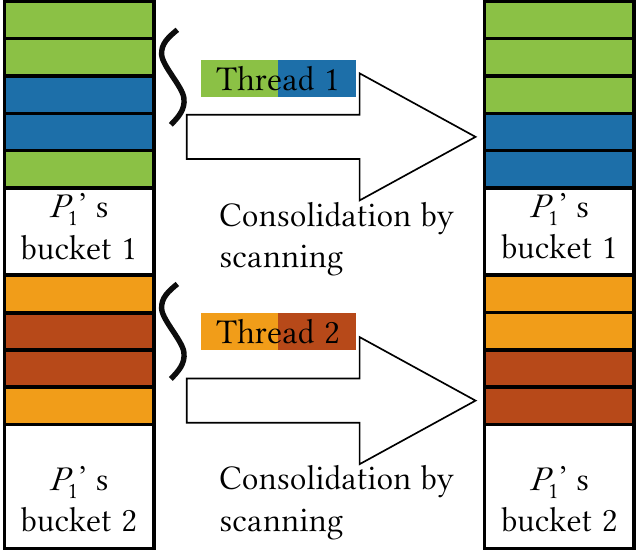}
		\caption{Consolidation in buckets.}
		\label{fig:con-bucket}
	\end{subfigure}
    \captionsetup{aboveskip=0pt}
	\caption{Comparison of the operation consolidation using buffer and buckets.}
	\label{fig:app-con}
\end{figure}

\vspace{1mm}
\noindent
\textbf{Complexity analysis}
We include the details of the consolidation process and the complexity analysis. There are two basic methods in operation consolidation in a single buffer. The first method is to sort the buffered operations by their query ids. The second method is to scan the buffer $|Q|$ rounds and select all the operations belonging to one query at one round. We use $R$ to denote the number of buffered operations in a buffer. Thus, the complexities of the two methods are $O(R\log(R))$ and $O(|Q|R)$, respectively. Both methods are costly.

To reduce the cost, we apply the multi-bucket design mentioned above for each buffer. In this way, we can consolidate operations in each bucket independently by either sorting or scanning. Particularly, as we can assume the operations are uniformly distributed among different queries, the complexity of sorting operations in a bucket is $O(\frac{R}{K}\log(\frac{R}{K}))$, and the total cost of $K$ buckets is $O(R\log(\frac{R}{K}))$; the complexity of scanning operations in each bucket is $O(\frac{|Q|}{K}\frac{R}{K})$, and thus the total cost of $K$ buckets is $O(|Q|\frac{R}{K})$. We summarize the complexities of different conditions in Table~\ref{tab:con-complexity}. Essentially, allocating buckets can significantly reduce the cost of consolidation. Note that when we set $K=|Q|$, each bucket only stores operations of a single query, and the cost of consolidation is negligible.

In the implementation, we keep a counter that indicates the number of buffered operations in a bucket. When a thread is ready to send its local operations to the bucket, the thread uses an atomic adder implemented by compare-and-swap (CAS) to increase the counter and thus reserves a range of the space in the bucket. The thread then copies its local operations to the reserved memory space, and there is no conflict issue in this step.
\begin{table}[t]
    \captionsetup{aboveskip=0pt}
	\caption{Time complexities of different methods of operation consolidation in buffer or buckets.}
	\centering
	\label{tab:con-complexity}
	\footnotesize
\resizebox{0.495\textwidth}{!}{
		\begin{tabular}{|l|c|c|c|c|}
	    \Xhline{2\arrayrulewidth}
		\textbf{Method} & \textbf{Buffer} & \textbf{Bucket, $\bm{K=2}$} & \textbf{Bucket, $\bm{K=\frac{|Q|}{2}}$} & \textbf{Bucket, $\bm{K=|Q|}$} \\
		\Xhline{2\arrayrulewidth}
		Sort  & $O(R\log(R))$ & $O(R\log(\frac{R}{2}))$      & $O(R\log(\frac{2R}{|Q|})$    & N.A.          \\ \hline
		Scan & $O(|Q|R)$    & $O(\frac{|Q|}{2}R)$ & $O(2R)$                     & N.A.          \\
	    \Xhline{2\arrayrulewidth}
	\end{tabular}
}
\end{table}

\subsection{Independence of Yielding and Priority-based Scheduling}

The yielding and priority-based scheduling are independent. The execution of either one will not affect the effect taken by the other. We give detailed explanations as follows.

On the one hand, the priority-based scheduling does not affect yielding. It only views the priority values of every partition and decides which partition to process next. On the other hand, yielding does not affect the priority-based scheduling as it only decides how much to process queries within one partition. Particularly, different settings of yielding results into the following two cases: 1. Yield the last query in Partition $P_1$ but reschedule $P_1$ back immediately when the priority value of $P_1$ is still the highest among all partitions. 2. Yield the last query in $P_1$ and schedule another Partition $P_x$ to process when the priority value of $P_1$ is no longer the highest. Note that due to the algebraic property (e.g., monotonicity in SSSP) in execution, operations with highest priorities will always be found and also scheduled earlier than others. Similarly, the highest priority will also be used to update other partitions. As a result, $P_x$ will be deterministic based on the priority, which is independent from yielding.

\subsection{Walk-through of Different Scheduling Methods}\label{app:walk-through}

We give the detailed walk-through of the example in Figure~\ref{fig:priority}. We mainly show the operations processed by applying FIFO and Priority-based scheduling. The other two scheduling methods are similar and thus omitted.

In the FIFO scheduling, $P_3$ will be the first partition to process, because it contains the shortest path and thus will be updated first after processing $P_1$. \bufGraph{} processes 2 operations ($\langle q_1, \mathsf{E}, 2\rangle$ and $\langle q_2, \mathsf{E}, 1\rangle$) in $P_3$, and updates $P_2$. Since $P_1$ updates $P_4$ before $P_2$, we then process operations in $P_4$. $P_4$ contains 4 operations but 2 of them can be discarded by consolidation. Therefore, we process $\langle q_1, \mathsf{F}, 3\rangle$ and $\langle q_2, \mathsf{F}, 5\rangle$, and then update $P_2$. $P_2$ now contains 6 operations and those 2 sent from P3 are better than others. Thus, we process operations $\langle q_1, \mathsf{D}, 3\rangle$ and $\langle q_2, \mathsf{D}, 2\rangle$ in $P_2$. We do not need to send $\langle q_1, \mathsf{F}, 4\rangle$ from $P_2$ to $P_4$ since the distance of $\mathsf{F}$ in $P_4$ is 3, which is better. We thus only process $q_2$ in $P_4$. In summary, there are in total 7 operations processed.

In the priority-based scheduling, where we already make the partition with the shortest path the highest priority, we only need to process 6 operations. The process order is $P_3$, $P_2$, and $P_4$. In $P_3$, we first process 2 operations and send 2 to $P_2$. Then we schedule $P_2$, because it contains $\langle q_2, \mathsf{D}, 2\rangle$, the operation with the highest priority. After that, $P2$ sends only 1 operation $\langle q_2, \mathsf{F}, 3\rangle$ to $P_4$, and process 2 operations $\langle q_1, \mathsf{F}, 4\rangle$ and $\langle q_2, \mathsf{F}, 3\rangle$ in $P_4$. Therefore, there are in total 6 operations processed.

\section{More Experimental Results}

\subsection{Time Breakdown of Memory Stalls} Figure~\ref{fig:breakdown} shows the memory stall distribution of the four evaluated GPSs on solving NCP. As more than $34\%$ of the execution time of other GPSs is spent on memory stalls, the time spent in \bufGraph{} is only less than $20\%$ of the total execution time. We can obtain the following implications on FPP queries processing through the evaluations on the number of LLC misses and memory stall distribution. First, although applying inter-query parallelism (i.e., let $t=1$) can achieve better performance than the approach of intra-query parallelism ($t=10$) for most of the cases, a larger number of LLC misses and high memory stalls make the improvement trifling. Second, \bufGraph{}'s design can successfully limit the operations to graph partitions in the LLC and thus reduces the percentage of the costly DRAM accesses, which is shown as the low memory stall distribution in the figure.

\begin{figure}[h]
	\centering
	\begin{tikzpicture}
\begin{axis}[
yticklabel style = {font=\footnotesize},
xticklabel style = {font=\footnotesize},
ylabel style={font=\footnotesize, align=center},
xlabel style={font=\footnotesize},
    width=82mm,
    height=36mm,
    ybar stacked,
    bar width=9pt,
    ymax=105,
    legend style={
        at={(0.5, 1)},
        anchor=south,
        draw=none,
        fill=none,
        align=left,
        legend columns=2,
        font=\small,
    },
    ylabel=Time breakdown \newline of memory units,
    yticklabel={$\pgfmathprintnumber{\tick}\%$},
    symbolic x coords={Ligra ($t=10$), Ligra ($t=1$), Gemini ($t=10$), Gemini ($t=1$), GraphIt ($t=10$), GraphIt ($t=1$), \bufGraph{}},
    xtick=data,
    x tick label style={rotate=45,anchor=east},
    ]
\addplot+[ybar, fill=black,draw=black] plot coordinates {(Ligra ($t=10$),39) (Ligra ($t=1$),54) (Gemini ($t=10$),34) (Gemini ($t=1$),38) (GraphIt ($t=10$),40) (GraphIt ($t=1$),55) (\bufGraph{},19)};
\addplot+[ybar, fill=white, draw=black] plot coordinates {(Ligra ($t=10$),61) (Ligra ($t=1$),46) (Gemini ($t=10$),66) (Gemini ($t=1$),62) (GraphIt ($t=10$),60) (GraphIt ($t=1$),45) (\bufGraph{},81)};
\legend{\strut mem unit stalled, \strut mem unit not stalled}
\end{axis}
\end{tikzpicture}
	\captionsetup{aboveskip=2pt}
	\caption{Time breakdown of memory units in the four GPSs, solving NCP on \texttt{Lj}.}
	\label{fig:breakdown}
\end{figure}

\subsection{Scalability}
Figure~\ref{fig:cpu-cont} shows \bufGraph{}'s speedups with the number of threads varied, solving NCP on different graphs. Each thread is assigned to a unique physical core. \SR{\bufGraph{} can achieve $7-8\times$ speedups when scaling up to 10 cores (with hyper-threading disabled) for most of the graphs. However, the speedup of \bufGraph{} on \texttt{Tw} is limited to $5.2\times$. The major reason is the high skewness of \texttt{Tw} graph, which may cause a severe unbalanced workload among threads and thus low CPU utilization. We leave adaptive workload assignments as an enhancement of the system in the future.}

We analyze the scalability of \bufGraph{} on solving different numbers of FPP queries. We also vary the query types with two new types: DFSs (Depth First Searches) and RWs (Random Walks). For RW, we use the setting from a previous study~\cite{alamgir2010multi}. We measure the throughput to study the scalability of different query types.

Figure~\ref{fig:throughput} plots the normalized throughput of \bufGraph{}. \SR{Usually, the query results will be too large to fit in the main memory when the query count is too large, i.e., 10,000 queries. In this case, we divide queries into ten batches so that each batch of 1,000 queries can fit into the main memory of the machine.} We can observe from the figure that 1) \bufGraph{}'s throughput increases with more FPP queries provided, and 2) \bufGraph{} shows high throughput improvement on queries like PPR, DFS, and RW because these queries have good temporal localities, tending to run many iterations within partitions, 2) although the throughput does not increase much further on SSSPs and BFSs, the processing in \bufGraph{} can remain at the high processing performance, \SR{without obvious penalty caused by redundant operations among more queries}.

\begin{figure}[t]
	\centering
	\begin{minipage}[t]{.22\textwidth}
		\centering
		\begin{tikzpicture}
    \begin{axis}[
        ylabel=Speedup,
        xlabel=Number of available threads,
        width=45mm,
        height=50mm,
        ylabel near ticks,
        xlabel near ticks,
yticklabel style = {font=\normalsize},
xticklabel style = {font=\normalsize},
ylabel style={font=\normalsize},
xlabel style={font=\normalsize},
        symbolic x coords={1,2,3,4,5,6,7,8,9,10},
        line width=0.5,
        xtick=data,
        enlarge x limits=0.05,
        ytick distance = 2,
        ymin = 0,
        ymax = 9,
        legend columns=1,
        legend style={
            at={(-0.04, 1.02)},
			anchor=north west,
			draw=none,
			fill=none,
			font=\scriptsize,
		},
        legend image code/.code={\draw[mark repeat=2,mark phase=2]
			plot coordinates {
				(0cm,0cm)
				(0.15cm,0cm)        
				(0.3cm,0cm)         
			};%
		},
        ]
    \addplot[mark=*,mark size=1pt,mark options={scale=2, color=\ClgC, draw=black}, color=\ClgC] table [x=index, y=Or, col sep=space] {plot/contention.csv};
    \addlegendentry{\texttt{Or}}
    \addplot[mark=square*,mark size=1pt,mark options={scale=2, color=\CgmC, draw=black}, color=\CgmC] table [x=index, y=Wk, col sep=space] {plot/contention.csv};
    \addlegendentry{\texttt{Wk}}
    \addplot[mark=triangle*,mark size=1pt,mark options={scale=2, color=\CgtC, draw=black}, color=\CgtC] table [x=index, y=Lj, col sep=space] {plot/contention.csv};
    \addlegendentry{\texttt{Lj}}
    \addplot[mark=diamond*,mark size=1pt,mark options={scale=2, color=yellow, draw=black}, color=yellow] table [x=index, y=Pt, col sep=space] {plot/contention.csv};
	\addlegendentry{\texttt{Pt}}
    \addplot[mark=pentagon*,mark size=1pt,mark options={scale=2, color=orange, draw=black}, color=orange] table [x=index, y=Tw, col sep=space] {plot/contention.csv};
	\addlegendentry{\texttt{Tw}}
    \end{axis}

\end{tikzpicture}
		\captionsetup{aboveskip=-6pt}
		\caption{Speedups with more threads enabled in \bufGraph{}, solving NCP on different graphs.}
		\label{fig:cpu-cont}
	\end{minipage}
	\hfill
	\begin{minipage}[t]{0.22\textwidth}
		\centering
		\begin{tikzpicture}
    \begin{axis}[
yticklabel style = {font=\normalsize},
xticklabel style = {font=\normalsize},
ylabel style={font=\normalsize},
xlabel style={font=\normalsize},
        width=45mm,
        height=48mm,
        line width=0.5,
        ylabel=Normalized throughput,
        xlabel=Number of queries,
        ymax = 18,
        xtick = {1, 2, 3, 4, 5},
        xticklabels = {$10^0$, $10^1$, $10^2$, $10^3$, $10^4$},
        legend columns=1,
        legend style={
            at={(-0.04, 1.02)},
            anchor=north west,
            draw=none,
            fill=none,
            font=\scriptsize,
        },
        legend image code/.code={\draw[mark repeat=2,mark phase=2]
            plot coordinates {
                (0cm,0cm)
                (0.15cm,0cm)        
                (0.3cm,0cm)         
            };%
        },
    ]
    \addplot[mark=*,mark size=1pt,mark options={scale=2, color=\ClgC, draw=black}, color=\ClgC] table [x=index, y=PPR, col sep=space] {plot/throughput.csv};
    \addlegendentry{PPR on \texttt{Lj}}
    \addplot[mark=square*,mark size=1pt,mark options={scale=2, color=\CgmC, draw=black}, color=\CgmC] table [x=index, y=DFS, col sep=space] {plot/throughput.csv};
    \addlegendentry{DFS on \texttt{Tw}}
    \addplot[mark=triangle*,mark size=1pt,mark options={scale=2, color=\CgtC, draw=black}, color=\CgtC] table [x=index, y=RW, col sep=space] {plot/throughput.csv};
    \addlegendentry{RW on \texttt{Us}}
    \addplot[mark=diamond*,mark size=1pt,mark options={scale=2, color=yellow, draw=black}, color=yellow] table [x=index, y=SSSP, col sep=space] {plot/throughput.csv};
    \addlegendentry{SSSP on \texttt{Us}}
    \addplot[mark=pentagon*,mark size=1pt,mark options={scale=2, color=orange, draw=black}, color=orange] table [x=index, y=BFS, col sep=space] {plot/throughput.csv};
    \addlegendentry{BFS on \texttt{Tw}}
    \end{axis}
\end{tikzpicture}
		\captionsetup{aboveskip=-6pt}
		\caption{Scalability of \bufGraph{} on solving FPP queries at different scales.}
		\label{fig:throughput}
	\end{minipage}
\end{figure}

\subsection{Effects of Partition and Cache Size} \label{subsec:eval:partition}

\textbf{Effect of partition methods.}
\SR{Partition sizes and methods play important roles in \bufGraph{} setting. There are overwhelming studies in graph partitioning methods, typically defined with two major objectives: load balance and minimum cuts (vertex or edge). As there is only one partition processed at the same time in \bufGraph{}, unbalanced partitioning does not affect the load among threads. The objective of load balance is not necessary here. Minimizing cuts is more important for \bufGraph{}. We use METIS as it is one of the state-of-the-art tools for edge-cut partitioning.}

\SR{First, to evaluate the hypothesis, we evaluate the effects of some common graph partition methods including METIS, random partitioning, Gemini's lightweight partition~\cite{zhu2016gemini}, and GridGraph's 2D partition~\cite{zhu2015gridgraph}. \bufGraph{} on METIS partition shows $14.1\times$, $4.2\times$, and $8.3\times$ speedups over a random partition, Gemini's lightweight partition, and GridGraph's 2D partition, respectively, when executing LL and BC on different graphs. Similarly, \bufGraph{} on METIS partition shows $1.1-3.6\times$ speedups over other partitioning methods when executing NCP on different web and social networks.}

\vspace{1mm}\noindent
\textbf{Effect of partition and cache sizes.} We first evaluate the effects of partition sizes. Figure~\ref{fig:app:part} shows the normalized execution time of \bufGraph{} on different partition sizes, on processing two applications using four graphs. The results show that using LLC-size partitions achieves the best performance for most cases. As the graph partition size increases, there will be fewer partitions to schedule but higher cache thrashing. Further, if we divide the graph into small partitions, there will be too many partitions to schedule and the runtime overhead will be high.

Evaluating the effect of cache by executing \bufGraph{} on other machines may not isolate the impact of cache only, as we do not have CPUs with different LLC sizes but with the same architectures on other aspects. Therefore, we refer to the evaluation in Figure~\ref{fig:app:part}. It helps us to understand the performance on smaller LLC, where graph partitions are smaller. \bufGraph{} with the partition size equals to LLC shows $1.1-4.1\times$ speedups over the settings with partition size of 1/2 and 1/4 of LLC. Thus, we expect that \bufGraph{} could achieve better performance on machines with larger caches.

\begin{figure}[t]
	\centering
	\begin{tikzpicture}
    \begin{axis}[
        y tick label style={
            /pgf/number format/fixed,
            /pgf/number format/fixed zerofill,
            /pgf/number format/precision=0,
        },
yticklabel style = {font=\footnotesize},
xticklabel style = {font=\footnotesize},
ylabel style={font=\footnotesize},
xlabel style={font=\footnotesize},
        ytick distance = 1,
        enlarge x limits=0.15,
        width=85mm,
        height=36mm,
        ymin=0,
        ymax=5,
        ylabel=Normalized execution time,
        ybar=0pt,
        bar width=6pt,
        xtick = {1, 2, 3, 4},
        xticklabels = {LL on \texttt{Ca}, LL on \texttt{Us}, NCP on \texttt{Lj}, NCP on \texttt{Tw}},
        extra y ticks = 1,
        extra y tick labels={},
        extra y tick style={grid=major,major grid style={dashed, draw=red}},
        legend columns=-1,
        legend style={
            at={(0.5, 1.265)},
            anchor=north,
            draw=none,
            font=\scriptsize,
        },
        nodes near coords style={
	    	font=\tiny,
	    	/pgf/number format/fixed,
	    	/pgf/number format/fixed zerofill,
	    	/pgf/number format/precision=2,
	    	anchor=south,
	    	xshift=3.65\pgfkeysvalueof{/pgf/bar width},
	    	yshift=-1.65\pgfkeysvalueof{/pgf/bar width},
    	},
        legend image code/.code={\draw [#1] (0cm,-0.1cm) rectangle (0.15cm,0.1cm);},
                nodes near coords style={
            font=\tiny,
            /pgf/number format/fixed,
            /pgf/number format/fixed zerofill,
            /pgf/number format/precision=1,
            anchor=west,
            rotate=90,
            shift={(axis direction cs:0,0.2)}
        },
    ]
    \addplot[fill=\ClgC, nodes near coords] table [x=index, y=1, col sep=space] {plot/app_part.csv};
    \addlegendentry{$1/4LLC.size$}
    \addplot[fill=\CgmC, nodes near coords] table [x=index, y=2, col sep=space] {plot/app_part.csv};
    \addlegendentry{$1/2LLC.size$}
    \addplot[fill=black, nodes near coords] table [x=index, y=3, col sep=space] {plot/app_part.csv};
	\addlegendentry{$LLC.size$}
    \addplot[fill=\CgtC, nodes near coords] table [x=index, y=4, col sep=space] {plot/app_part.csv};
    \addlegendentry{$2LLC.size$}
    \addplot[fill=yellow, nodes near coords] table [x=index, y=5, col sep=space] {plot/app_part.csv};
    \addlegendentry{$4LLC.size$}
    \end{axis}
\end{tikzpicture}
	\captionsetup{aboveskip=0pt}
	\captionsetup{belowskip=-3pt}
	\caption{Normalized execution time of \bufGraph{} on different partition sizes.}
	\label{fig:app:part}
\end{figure}

\section{More Related Works}

Prior works propose leveraging graph properties, e.g., vertex degree and the application-visible data access pattern, to improve cache locality. Gorder~\cite{wei2016speedup} is a reordering method that renumbers vertices to maximize the cache-line locality among vertices whose neighbors have consecutive IDs. ReCALL~\cite{lakhotia2017recall} reorders the blocks of nodes in the same cache line to further improves the reordered graph generated by Gorder. Rabbit ordering~\cite{arai2016rabbit} considers the multi-level cache hierarchy and maps the communities in graphs accordingly. Specifically, the smaller and denser communities are mapped to caches that are closer to the processor. Zhang et al.~\cite{zhang2017making} and Faldu et al.~\cite{faldu2020closer} use frequency-based clustering and the skew in vertex degree distribution to improve the utilization of each cache line. The techniques are desirable because they require no modifications to the graph processing, and are orthogonal to \bufGraph{}.

\section{More Discussions}
\textbf{Atomic-free algorithms for FPP queries.} Existing studies present some atomic-free algorithms~\cite{dhulipala2020connectit, nasre2013atomic} to eliminate the synchronization barrier. While an atomic-free algorithm seems a good idea, it is usually algorithm-dependent and more complicated than our design. The atomic-free algorithm usually requires dedicated implementations and none of the existing graph processing systems uses such approaches. Thus, we focus on more general cases in the paper. Moreover, the existing atomic-free techniques trade off extra but cheap computation for mitigating the costly synchronization on GPUs~\cite{nasre2013atomic}. However, such designs are inefficient for multi-core systems as there are many redundant updates. We can expect that it would result in inferior performance.

Second, as a sanity check, we implement the atomic-free SSSP algorithm based on the implementation of the Bellman-Ford algorithm in~\cite{nasre2013atomic} using Ligra. Particularly, multiple CPU threads can simultaneously update the distance of the same node without synchronization. Any lost update will be reconsidered in the next iteration by leveraging the monotonicity of SSSP computation in the topology-driven manner~\cite{pingali2011tao}. We evaluate the implementation and have the observations as follows. 1) The atomic-free computation is a few times slower than the atomic-based one because of redundant updates and computations. 2) When leveraging inter-query parallelism on the atomic-based implementation, the performance drops around $40\%$ compared to the case of leveraging intra-query parallelism, which is mainly caused by cache thrashing. We will add this part to the appendix of a complete version.

\vspace{1mm}\noindent
\textbf{Limitation.} We can observe that \bufGraph{} consistently outperforms other GPSs in the evaluation of solving FPP queries. However, as \bufGraph{} is specially designed for FPP queries, it does not outperform other GPSs on single-query applications like a single BFS. \SR{Besides, \bufGraph{} only targets massive independent queries. It could be an interesting future study to support applications like multi-commodity flow~\cite{awerbuch1994improved}, where queries are dependent on each other. A promising extension is introducing concurrency control to \bufGraph{}, which could bring interesting research connecting relational transaction processing and graph processing.}

\balance
\else
\fi

\end{document}